\documentclass[11pt]{article}

\usepackage[utf8]{inputenc}
\usepackage[T1]{fontenc}
\usepackage{lmodern}
\usepackage[DIV=11]{typearea} 
\usepackage{microtype}
\usepackage{amssymb}
\usepackage{amsmath}
\usepackage{amsthm}
\usepackage{thmtools}

\usepackage{xcolor}
\usepackage{xspace}
\usepackage{paralist}
\usepackage{comment}

\usepackage[
	style=alphabetic,
	backref=true,
	doi=false,
	url=false,
	maxcitenames=3,
	mincitenames=3,
	maxbibnames=10,
	minbibnames=10,
	backend=bibtex8,
	sortlocale=en_US,
	sorting=nyt,
	bibencoding=ascii
]{biblatex}

\usepackage[ocgcolorlinks]{hyperref} 

\newcommand*\samethanks[1][\value{footnote}]{\footnotemark[#1]}

\makeatletter
\g@addto@macro\bfseries{\boldmath}
\makeatother

\makeatletter
\g@addto@macro\mdseries{\unboldmath}
\g@addto@macro\normalfont{\unboldmath}
\g@addto@macro\rmfamily{\unboldmath}
\g@addto@macro\upshape{\unboldmath}
\makeatother

\DeclareCiteCommand{\citem}
    {}
    {\mkbibbrackets{\bibhyperref{\usebibmacro{postnote}}}}
    {\multicitedelim}
    {}


\renewcommand*{\multicitedelim}{\addcomma\space}

\AtEveryCitekey{\clearfield{note}}

%

\newcommand{\myhref}[1]{%
  \iffieldundef{doi}
    {\iffieldundef{url}
       {#1}
       {\href{\strfield{url}}{#1}}}
    {\href{http://dx.doi.org/\strfield{doi}}{#1}}%
}

\DeclareFieldFormat{title}{\myhref{\mkbibemph{#1}}}
\DeclareFieldFormat
  [article,inbook,incollection,inproceedings,patent,thesis,unpublished]
  {title}{\myhref{\mkbibquote{#1\isdot}}}

\makeatletter
\AtBeginDocument{%
    \newlength{\temp@x}%
    \newlength{\temp@y}%
    \newlength{\temp@w}%
    \newlength{\temp@h}%
    \def\my@coords#1#2#3#4{%
      \setlength{\temp@x}{#1}%
      \setlength{\temp@y}{#2}%
      \setlength{\temp@w}{#3}%
      \setlength{\temp@h}{#4}%
      \adjustlengths{}%
      \my@pdfliteral{\strip@pt\temp@x\space\strip@pt\temp@y\space\strip@pt\temp@w\space\strip@pt\temp@h\space re}}%
    \ifpdf
      \typeout{In PDF mode}%
      \def\my@pdfliteral#1{\pdfliteral page{#1}}
      \def\adjustlengths{}%
    \fi
    \ifxetex
      \def\my@pdfliteral #1{}
      \def\adjustlengths{\setlength{\temp@h}{-\temp@h}\addtolength{\temp@y}{1in}\addtolength{\temp@x}{-1in}}%
    \fi%
    \def\Hy@colorlink#1{%
      \begingroup
        \ifHy@ocgcolorlinks
          \def\Hy@ocgcolor{#1}%
          \my@pdfliteral{q}%
          \my@pdfliteral{7 Tr}
        \else
          \HyColor@UseColor#1%
        \fi
    }%
    \def\Hy@endcolorlink{%
      \ifHy@ocgcolorlinks%
        \my@pdfliteral{/OC/OCPrint BDC}%
        \my@coords{0pt}{0pt}{\pdfpagewidth}{\pdfpageheight}%
        \my@pdfliteral{F}
        %
        \my@pdfliteral{EMC/OC/OCView BDC}%
        \begingroup%
          \expandafter\HyColor@UseColor\Hy@ocgcolor%
          \my@coords{0pt}{0pt}{\pdfpagewidth}{\pdfpageheight}%
          \my@pdfliteral{F}
        \endgroup%
        \my@pdfliteral{EMC}%
        \my@pdfliteral{0 Tr}
        \my@pdfliteral{Q}%
      \fi
      \endgroup
    }%
}
\makeatother

\addbibresource{biblio.bib}

\colorlet{DarkRed}{red!50!black}
\colorlet{DarkGreen}{green!50!black}
\colorlet{DarkBlue}{blue!50!black}

\hypersetup{
	linkcolor = DarkRed,
	citecolor = DarkGreen,
	urlcolor = DarkBlue,
	bookmarksnumbered = true,
	linktocpage = true
}

\declaretheorem[numberwithin=section]{theorem}
\declaretheorem[numberlike=theorem]{lemma}

\declaretheorem[numberlike=theorem]{corollary}
\declaretheorem[numberlike=theorem]{definition}
\declaretheorem[numberlike=theorem]{observation}

\newcommand{\R}{\mathbb{R}}
\def\B{\mathcal{B}}
\def\M{\mathcal{M}}
\newcommand{\set}[1]{\{#1\}}

\renewcommand{\epsilon}{\varepsilon}

\newcommand{\gs}{\textsc{gs}}
\newcommand{\we}{\textsc{we}}
\newcommand{\newwe}{\textsc{reserve-we}\textup{(}r\textup{)}}
\newcommand{\lp}{\textsc{welfare-lp}}
\newcommand{\newlp}{\textsc{reserve-lp}\textup{(}r\textup{)}}

\newcommand{\m}{\textsc{ef}}
\newcommand{\OR}{\textsc{or}}

\newcommand{\lu}{\textup{(}}
\newcommand{\ru}{\textup{)}\xspace}
\newcommand{\upbr}[1]{\lu #1\ru}

\DeclareMathOperator*{\argmax}{arg\,max}
\DeclareMathOperator*{\argmin}{arg\,min}
\newcommand{\ignore}[1]{}

\newif\iffullversion
\newcommand{\infull}[1]{\iffullversion #1\fi}
\newcommand{\inshort}[1]{\iffullversion \else #1\fi}
\fullversionfalse
\fullversiontrue

\usepackage[retainorgcmds]{IEEEtrantools}
\IEEEeqnarraydefcolsep{0}{3cm}
\makeatletter
\newcommand\ztag[1]{%
\def\@currentlabel{#1}%
\gdef\tmp{%
\addtocounter{equation}{-1}%
\def\theequation{#1}}%
\aftergroup\aftergroup\aftergroup\aftergroup\aftergroup\aftergroup
\aftergroup\aftergroup\aftergroup\aftergroup\aftergroup\aftergroup
\aftergroup\aftergroup\aftergroup\aftergroup\aftergroup\aftergroup
\aftergroup\aftergroup\aftergroup\aftergroup\aftergroup\aftergroup
\aftergroup\aftergroup\aftergroup\aftergroup\aftergroup\aftergroup
\aftergroup
\tmp}

\usepackage{booktabs}

\title{Ad Exchange: \\Envy-Free Auctions with Mediators\thanks{
A short version of this paper appeared in \cite{BenzwiHL15}.}}

\author{
Oren Ben-Zwi\thanks{emarsys labs, emarsys eMarketing Systems AG, Vienna, Austria.
Work done while at the University of Vienna, Faculty of Computer Science.}
\and
Monika Henzinger\thanks{University of Vienna, Faculty of Computer Science, Vienna, Austria.}
\and
Veronika Loitzenbauer\samethanks[3]
}

\date{}

\hypersetup{
	pdftitle = {Ad Exchange: Envy-Free Auctions with Mediators},
	pdfauthor = {Oren Ben-Zwi, Monika Henzinger, Veronika Loitzenbauer}
}

\begin{document}
\maketitle
\begin{abstract}
Ad exchanges are an emerging platform for trading advertisement slots on the web 
with billions of dollars revenue per year. Every time  a user visits a web page, the publisher
of that web page can ask an ad exchange to auction off the ad slots on this page
to determine which advertisements 
are shown at which price. Due to the high volume of traffic, ad networks typically act as mediators for 
individual advertisers at ad exchanges. If multiple advertisers in an ad network are interested
in the ad slots of the same auction, the ad network might use a ``local'' auction 
to resell the obtained ad slots among its advertisers.

In this work we want to deepen the theoretical understanding of these new markets
by analyzing them from the viewpoint of combinatorial auctions.
Prior work studied mostly single-item auctions, while we allow
the advertisers to express richer preferences over multiple items.
We develop a game-theoretic model for the entanglement of the {\em central} auction
at the ad exchange with the {\em local} auctions at the ad networks. We consider the
incentives of all three involved parties and suggest a \emph{three-party competitive equilibrium},
an extension of the Walrasian equilibrium that ensures envy-freeness for all participants.
We show the existence of a three-party competitive equilibrium and a polynomial-time algorithm
to find one for gross-substitute bidder valuations.
\end{abstract}


%
\section{Introduction}
As advertising on the  web becomes more  mature, 
\emph{ad exchanges} (AdX) play a growing role as a platform for selling 
advertisement slots from publishers to advertisers~\cite{KorulaMN15}. 
Following the Yahoo! acquisition 
of Right Media in 2007, all major web companies, such as Google, Facebook, and 
Amazon, have created or acquired their own ad exchanges. 
Other major ad exchanges are provided by the Rubicon
Project, OpenX, and AppNexus. 
Every time a  user visits a web page, the publisher
of that web page can ask an ad exchange to auction off the ad slots on this page. 
Thus, the goods traded at an ad exchange are {\em ad impressions}.
This process is also known as \emph{real-time bidding} (RTB).
A web page might contain multiple ad slots, which are currently modeled to be 
sold separately in individual auctions.
Individual advertisers typically do not directly participate in these auctions 
but entrust some ad network to bid on their behalf.
When a publisher sends  an ad impression to an exchange, 
the exchange usually contacts several ad networks and runs a (variant of a) second-price auction~\cite{MansourMN2012} between them,
potentially with a reserve price under which the impression is not sold.
An ad network (e.g. Google's Display Network~\cite{GDN}) might then run 
a second, ``local'' auction to determine the allocation of the
ad slot among its advertisers. We study this 
interaction of a \emph{central auction} at the exchange and \emph{local 
auctions} at the ad networks.\footnote{In this work an auction is an 
algorithm to determine prices of items and their allocation to bidders.}

We develop a game-theoretic model that considers the 
incentives of the following three parties: (1) the ad exchange, (2)  the ad networks, 
and (3) the advertisers. 
As the ad exchange usually charges a fixed percentage of the revenue and hands
the rest to the publishers, the ad exchange and the publishers have the same objective
and can be modeled as one entity.
We then study equilibrium concepts of this new model of a three-party exchange. 
Our model is  described as an ad exchange, but it may also model other scenarios 
with mediators that act between bidders and sellers, as noted already by 
Feldman~et~al.~\cite{FeldmanMMP2010}.
The main differences between our model and earlier models (discussed in detail 
at the end of this section) are the following:
(a) We consider the incentives of all three parties {\em simultaneously}. 
(b) While most approaches in prior work use Bayesian assumptions, we apply {\em  worst-case analysis}.
(c) We allow auctions with {\em multiple heterogeneous items}, namely combinatorial 
auctions, in contrast to the single-item auctions studied so far. 
Multiple items arise naturally when selling ad slots on a per-impression basis, 
since there are usually multiple advertisement slots on a web page.

To motivate the incentives of ad networks and exchanges, we
compare next their short and long-term revenue considerations, following Mansour~et~al.~\cite{MansourMN2012} and Muthukrishnan~\cite{Muthukrishnan2009}.
Ad exchanges and ad networks generate revenue as follows:
(1) An ad exchange usually receives some percentage of the price paid by the winner(s) of 
the central auction. 
(2) An ad network can charge a higher price to 
its advertisers than it paid to the exchange or it can be paid via direct
contracts with its advertisers.
Thus both the ad exchange and the ad networks (might) profit from higher prices in their auctions.
However, they also have a motivation not to charge too high 
prices as (a) the advertisers could stick to alternative advertising channels such
as long-term contracts with publishers, and 
(b) there is a significant competition between 
the various ad exchanges and ad networks, as advertisers can easily switch to 
a competitor.
Thus, lower prices (might) increase advertiser participation and,
hence, the long-term revenue of ad exchanges and ad networks.
We only consider a single auction (of multiple items)
and leave it as an open question to study changes over time.
We still take the long-term considerations outlined above into account by
assuming that the ad exchange aligns its strategic behavior with its long-term
revenue considerations and only desires for each central auction
to sell {\em all} items.\footnote{Our model and results can be adapted to include
reserve prices under which the ad exchange is not willing to sell an item.}
In our model the incentive of an ad network to participate in the exchange
comes from the opportunity to purchase some items at a low price and then
resell them at a higher price. However, due to long-term considerations, 
our model additionally requires
the ad networks to ``satisfy their advertisers'' by faithfully representing
the advertisers' preferences towards the exchange, while still allowing the 
ad networks to extract revenue from the competition between the advertisers 
in their network.\footnote{We implicitly assume that the central auction prices are
accessible to the advertisers such that they can verify whether an ad
network represented their preferences correctly. Informally, we suggest that if 
one ad network ``satisfies its advertisers'' then, over time, all ad networks 
have to follow this behavior to keep their advertisers.}
An example for this kind of restriction for an ad network is 
Google's Display Network~\cite{GDN} that guarantees its advertisers that
each ad impression is sold via a second-price auction, independent of whether
an ad exchange is involved in the transaction or not~\cite{MansourMN2012}.

To model a \emph{stable} outcome in a three-party exchange,
we use the equilibrium concept of
\emph{envy-freeness} for all three types of participants. A participant is
envy-free if he receives his most preferred set of items under the current prices. 
Envy-freeness for all participants is a natural notion to express stability in a market,
as it implies that no coalition of participants would strictly profit from deviating
from the current allocation and prices (assuming truthfully reported preferences).
Thus an envy-free equilibrium supports stability in the market prices, which
in turn facilitates, for example, revenue prediction for prospective participants
and hence might increase participation and long-term revenue. 
For only two parties,
i.e., sellers and buyers, where the sellers have no intrinsic value for the
items they sell, envy-freeness for all participants is equal to a \emph{competitive}
or \emph{Walrasian} equilibrium~\cite{Walras1874}, a well established notion in economics
to characterize an equilibrium in a market where demand equals supply.
We provide a generalization of this equilibrium concept to three parties.

\paragraph*{Our Contribution}
We introduce the following model for ad exchanges.
A {\em central seller} wants to sell $k$ \emph{items}. There are $m$ \emph{mediators}
$\M_i$, each with her  own $n_i$ \emph{bidders}. Each bidder has a valuation function over all subsets of the items. 
In the ad exchange setting, the central seller is the ad exchange, the items are 
the ad slots shown to a visitor of a web page, the mediators are the ad networks, 
and the bidders are the advertisers. A bidder does not have any direct 
``connection'' to the central seller. Instead, all  communication is done 
through the mediators. A mechanism 
for allocating the items to the bidders is composed of a \emph{central
auction} with mediators acting as bidders, and then
\emph{local auctions}, one per mediator, in which every mediator allocates the set of items she
bought in the central auction; that is, an auction where the bidders of that mediator are 
the only participating bidders and the items that the mediator received in the 
central auction are the sole items.
The prices of the items obtained in the central auction 
provide a lower bound for the prices in the local auctions, i.e., they act as 
reserve prices in the local auctions.
We assume that the central seller and the bidders have quasi-linear utilities, 
i.e., utility 
functions that are linear in the price, and that their incentive is to maximize
their utility. For the central seller 
this means that his utility from selling a set of slots is just the sum of 
prices of the items in the set.
The utility of a bidder on receiving a set of items~$S$ is his value for~$S$ 
minus the sum of the prices of the items in~$S$.

The incentive of a mediator, however, is not so 
straightforward and needs to be defined carefully.
In our model,  to ``satisfy'' her  bidders, each mediator guarantees her  bidders that the outcome of the local auction
will be \emph{minimal envy free}, that is, for the final local price
vector, the item set that is allocated to any bidder is one of his most desirable sets over \emph{all} possible item sets
(even sets that contain items that were not allocated to his mediator, i.e., each 
bidder is not only {\em locally}, but {\em globally envy-free})
and there is no (item-wise) smaller price vector that fulfills this requirement. 
We assume that each mediator wants to maximize her revenue\footnote{For the 
purpose of this paper, the terms revenue and utility are interchangeable.} and 
define the revenue of a mediator for a set of items~$S$ as the difference 
between her earnings when selling~$S$ with this restriction and the price she 
has to pay for~$S$ at the central auction.

For this model we define a new equilibrium concept, namely the 
\emph{three-party competitive equilibrium}.
At this equilibrium all three types of participants are envy-free. Envy-free 
solutions for the bidders always exist, as one can set the 
prices of all items high enough so that no bidder will demand any item. 
Additionally, we require that there is no envy for the central seller, 
meaning that all items are sold.  If there were no mediators, then a two-party envy-free
solution would be exactly a \emph{Walrasian equilibrium}, which for certain 
scenarios can be guaranteed~\cite{KelsoCr1982}.
However, with mediators it is not a-priori 
clear that a three-party competitive equilibrium exists as,
additionally, the mediators have to be envy-free.
We show that for our definition of a mediator's revenue (a) the above requirements
are fulfilled and (b) a three-party competitive equilibrium 
exists whenever a Walrasian equilibrium for the central auction exists
or whenever a two-party equilibrium exists for the bidders and the central seller 
without mediators.
Interestingly, we show that for gross-substitute bidder valuations the incentives 
of this kind of mediator can be represented with an {\OR}-valuation over 
the valuations of her  bidders.
This then leads to the following result:
{\em For {gross-substitute} bidder valuations a three-party competitive equilibrium 
can be computed in polynomial time.} In particular, we will show how to compute the 
three-party competitive equilibrium with minimum prices.


\paragraph*{Related Work}
The theoretical research on ad exchanges was initialized by a survey 
of Muthukrishnan~\cite{Muthukrishnan2009} that lists several interesting research directions.
Our approach specifically addresses his 9th problem, namely to enable the 
advertisers to express more complex preferences that arise when multiple
advertisement slots are auctioned off at once as well as to design suitable 
auctions for the exchange and the ad networks to determine allocation
and prices given these preferences. 

The most closely related work with respect to the model of the ad exchange 
is Feldman~et~al.~\cite{FeldmanMMP2010}. It is similar to our work in two aspects: (1) The mediator bids on behalf 
of her  bidders in a central auction and the demand of the mediator as well as 
the tentative allocation and prices for reselling to her  bidders are 
determined via a local auction. (2)
The revenue of the mediator is the price she can obtain from reselling minus 
the price she paid in the central auction. The main differences are:
(a) Only one item is auctioned 
at a time and thus the mediator can determine her  valuation with a single 
local auction.
(b) Their work does not consider the incentives of the bidders, only of the mediators and the central seller.
(c) A Bayesian setting is used where the mediators and 
the exchange know the probability distributions of the bidders' valuations.
Based on this information, the mediators and the exchange choose reserve 
prices for their second-price auctions to maximize their revenue. The work
characterizes the equilibrium strategies for the selection of the reserve 
prices. 

Mansour~et~al.~\cite{MansourMN2012} (mainly) describe the auction at the DoubleClick
exchange. Similar to our work, advertisers use ad networks as mediators for 
the central auction. They observe that if mediators that participate in a 
single-item, second-price 
central auction are only allowed to submit a single bid, then it is not possible 
for the central auction to correctly implement a second-price auction over 
{\em all} bidders as the bidders with the highest and the second highest value 
might use the same mediator. Thus they introduce the
\emph{Optional Second Price} auction, where  every mediator is allowed
to optionally submit the second highest bid with the highest bid. In such an 
auction each mediator can guarantee to her bidders that if one of them is 
allocated the item, then he pays the (global) second-price for it. For the 
single-item setting, the bidders in their auction and in our auction pay the 
same price. If the mediator of the winning bidder did \emph{not} specify an optional 
second price, then her revenue will equal the revenue of our mediator. If she 
did, her revenue will be zero and the central seller will receive the gain 
between the prices in the local and the central auction.


Stavrogiannis~et~al.~\cite{StavrogiannisGP2013} consider a game between bidders and mediators, 
where the bidders can select mediators (based on Bayesian assumptions of 
each other's valuations) and the mediators can set the reserve prices in the 
second-price local auction. The work presents mixed Nash equilibrium strategies 
for the bidders to select their mediator. In~\cite{StavrogiannisGP2014} the same
authors compare different single-item local auctions with respect to the achieved
social welfare and the revenue of the mediators and the exchange.
 
Balseiro~et~al.~(2015) introduced a setting that 
does {\em not} include mediators~\cite{BalseiroBW15}.
Instead, they see the ad exchange as a game between publishers, who select 
parameters such as reserve prices for second-price auctions, and advertisers,
whose budget constraints link different auctions over time. They introduced
a new equilibrium concept for this game and used this to analyze the impact
of auction design questions such as the selection of a reserve price.
%
Balseiro~et~al.~(2014)~\cite{BalseiroFMM14} and Dvo{\v r}{\' a}k and Henzinger~\cite{DvorakH14} 
studied a publisher's trade-off between using an ad exchange versus fulfilling long-term contracts with advertisers. 



Equilibria in trading networks (such as ad exchanges) are also 
addressed in the ``matching with contracts'' literature.
Hatfield and Milgrom~\cite{HatfieldMi2005} presented a new model where instead of 
bidders and items there are \emph{agents} and \emph{trades} between pairs of 
agents. The potential trades are modeled as edges in a graph where the
agents are represented by the nodes. 
Agent valuations are then defined over the potential trades and assumed to be 
monotone substitute. They proved the existence of an (envy-free) equilibrium when
the agent-trades graph is bipartite. Later this was improved to directed
acyclic graphs by Ostrovsky~\cite{Ostrovsky2008} and to arbitrary 
graphs by Hatfield~et~al.~\cite{HatfieldKoNiOsWe2013}.
They did not show (polynomial-time) algorithms to reach equilibria.
Our model can be reduced to this model, hence a three-party equilibrium exists when 
all bidders are monotone gross substitute. 
However, we are not aware of a reduction that is polynomial in the
number of bidders and items.

\paragraph*{Outline}
We formally define
our model for ad exchanges in Section~\ref{sec:model}. 
In Section~\ref{sec:alg} we present our main results for gross-substitute bidders,
including a polynomial-time algorithm to 
compute a three-party competitive equilibrium.
Finally we conclude and suggest future directions in Section~\ref{sec:discuss}.

\section{Preliminaries}\label{sec:prelim}
Let $\Omega$ denote a set of $k$ items.
A \emph{price vector} is an assignment of a non-negative price to every element
of $\Omega$. For a price vector $p = (p_1, ..., p_k)$ and a set $S \subseteq
\Omega$ we use $p(S) = \sum_{j \in S} p_j$. 
For any two price vectors $p, r$ an inequality such as $p \ge r$ as well
as the operations $\min(p,r)$ and $\max(p,r)$ are meant item-wise.

We denote with $\langle \Omega_b\rangle = \langle \Omega_b\rangle_{b\in \B}$ an \emph{allocation} of the items in 
$\Omega$ such that for all bidders $b \in \B$ the set of
items allocated to $b$ is given by $\Omega_b$ and we have 
$\Omega_b \subseteq \Omega$ 
and $\Omega_b \cap \Omega_{b'} = \emptyset$ for $b' \ne b$, $b' \in \B$.
Note that some items might not be allocated to any bidder.

A \emph{valuation} function $v_b$ of a bidder~$b$ is a function from 
$2^{\Omega}$ to~$\R$, where $2^{\Omega}$ denotes the set of all subsets 
of~$\Omega$. We assume throughout the paper $v_b(\emptyset) = 0$.
Unless specified otherwise, for this work we assume \emph{monotone} valuations, that is, for $S\subseteq T$ we have $v_b(S)\leq v_b(T)$.
This assumption is made for ease of presentation.
We use $\{v_b\}$ to denote a collection of valuation functions.
The \inshort{(quasi-linear) }\emph{utility} of a bidder~$b$ from a set $S \subseteq \Omega$ at prices $p \ge 0$ is defined 
as $u_{b,p}(S) = v_b(S) - p(S)$. \infull{Such utility functions are often called 
\emph{quasi-linear}, i.e., linear in the price.}
The \emph{demand} $D_b(p)$ of a bidder $b$ for prices $p \ge 0$ is the set of subsets 
of items $S \subseteq \Omega$ that maximize the bidder's utility at prices~$p$. 
We call a set in the demand a \emph{demand representative}. Throughout the 
paper we omit subscripts if they are clear from the context.

\begin{definition}[Envy free]
An allocation $\langle \Omega_b\rangle$ of items $\Omega$ to bidders $\B$ is 
envy free \textup{(}on $\Omega$\textup{)} for some prices $p$ if for all bidders $b \in \B$, $\Omega_b\in D_b(p)$.
 We say that prices $p$ are envy free \textup{(}on $\Omega$\textup{)} if there 
exists an envy-free allocation \textup{(}on $\Omega$\textup{)} for these prices.
\end{definition}
There exist envy-free prices for any valuation functions of the 
bidders, e.g., set all prices to $\max_{b, S} v_b(S)$. For these prices
the allocation which does not allocate any item is envy free. 
Thus also minimal envy-free prices always exist, but are in general not unique.

\begin{definition}[Walrasian equilibrium \textup{(}\we{}\textup{)}]
A Walrasian equilibrium  \textup{(}on $\Omega$\textup{)} is an envy-free allocation $\langle 
\Omega_b\rangle$  \textup{(}on $\Omega$\textup{)} with prices $p$ such that all prices are non-negative 
and the price of unallocated items is zero. We call the allocation 
$\langle \Omega_b\rangle$ a Walrasian allocation  \textup{(}on $\Omega$\textup{)} 
and the prices $p$ Walrasian prices  \textup{(}on $\Omega$\textup{)}.
\end{definition}
We assume that the central seller has a value of zero for every subset of the 
items; thus (with quasi-linear utility functions) selling all items makes the 
seller envy free.
In this case a Walrasian equilibrium can be seen as an \emph{envy-free 
two-party equilibrium}, i.e., envy free for the buyers and the seller.
Note that for a Walrasian price vector there might exist multiple 
envy-free allocations.

\subsection{Valuation Classes}
\ignore{ 
Monotone valuation functions are valuation functions for which for every 
$S \subseteq T \subseteq \Omega$ it holds that $v(S) \le v(T)$.
We explicitly do \emph{not} assume
monotonicity of the valuation functions in general, while many known results
rely on this property.}

A \emph{unit demand} valuation assigns a value to every item and defines the value
of a set as the \emph{maximum} value of an item in it.
An \emph{additive} valuation also assigns a value to every item but defines
the value of a set as the \emph{sum} of the values of the items in the set.
Non-negative unit demand and non-negative additive valuations both have the 
gross-substitute property (defined below) and are by definition monotone. 

\begin{definition}[Gross substitute \textup{(}\gs{}\textup{)}]
A valuation function is \emph{gross substitute} if for every two price vectors 
$p^{(2)} \ge p^{(1)} \ge 0$ and every set $D^{(1)} \in D(p^{(1)})$, there exists a set 
$D^{(2)} \in D(p^{(2)})$ with $j \in D^{(2)}$ for every $j \in D^{(1)}$ with $p^{(1)}_j = p^{(2)}_j$.
\end{definition}

For \emph{gross-substitute} valuations of the bidders a Walrasian equilibrium
is guaranteed to exist in a two-sided market~\cite{KelsoCr1982} and can be 
computed in polynomial time~\cite{NisanS2006,PaesLeme2014}. Further, gross 
substitute is the maximal valuation class containing the unit demand class for 
which the former holds~\cite{GS1999}. 
Several equivalent definitions are known for this 
class~\cite{GS1999,PaesLeme2014}. 
We will further use that for gross-substitute valuations the
Walrasian prices form a complete lattice~\cite{GS1999}.

We define next an $\OR$-valuation. 
Lehmann~et~al.~\cite{LehmannLeNi2006} showed that the $\OR$ of 
gross-substitute valuations is gross substitute.
\begin{definition}[$\OR$-player]\label{def:OR}
The $\OR$ of two valuations $v$ and $w$ is defined as 
$(v\;\OR\; w)(S) = \max_{R, T \subseteq S, R \cap T = \emptyset}(v(R) + w(T))$.
Given a set of valuations $\{v_b\}$ for bidders $ b \in \B$,
we say that the $\OR$-player is a player with valuation 
$v_{\OR}(S) = \max_{\langle S_b\rangle} 
\sum_{b \in \B} v_b(S_b)\,.$
\end{definition}

\section{Model and Equilibrium} \label{sec:model}
There are $k$ items to be allocated to $m$ mediators. Each mediator $\M_i$
represents a set $\B_i$ of bidders, where $|\B_i|=n_i$. Each bidder
is connected to a unique mediator. Each bidder has a valuation function over 
all subsets of the items and a quasi-linear utility function.  
A \emph{central auction} is an 
auction run on all items with mediators as bidders. After an allocation $\langle \Omega_i\rangle$
and prices $r$ at 
the central auction are set, another $m$ \emph{local auctions} are conducted, one by 
each mediator. In the 
local auction for mediator $\M_i$ the items $\Omega_i$ that were allocated to 
her in the central auction are the sole
items and the bidders $\B_i$ are the sole bidders. A solution 
is an assignment of central-auction and local-auction prices to items and 
an allocation of items to bidders and hence,
by uniqueness, also to mediators. We define next a three-party equilibrium
based on envy-freeness.
\begin{definition}[Equilibrium]
A \emph{three-party competitive equilibrium} is an allocation of items to bidders
and a set of $m+1$ price vectors $r,p^1,p^2,\ldots,p^m$ such that the
following requirements hold.
For $1 \le i \le m$
\begin{enumerate}
	\item every mediator\footnote{Independent of how the demand of a mediator is 
	defined.} $\M_i$ is allocated a set $\Omega_i$ in her demand at price $r$,
	\item every item $j$ with non-zero price $r$ is allocated to a mediator,
	\item the price $p^i$ coincides with $r$ for all items not in $\Omega_i$\label{rq}, 
	\item and every bidder $b\in \B_i$ is allocated a subset of $\Omega_i$ that is in his demand at price $p^i$.
\end{enumerate}
\end{definition}
In other words, the allocation to the bidders in $\B_i$ with prices $p^i$ must be
envy-free for the bidders, the allocation to the mediators with
prices $r$ must be envy free for the mediators and for the central seller, i.e., 
must be a Walrasian equilibrium; and the prices $p^i$ must be equal 
to the prices $r$ for all items not assigned to mediator $\M_i$.

Note that the allocation of the items to the mediators and prices~$r$ are the 
outcome of a \emph{central} auction run by the central seller, while the allocation
to the bidders in $\B_i$ and prices~$p^i$ correspond to the outcome of a \emph{local}
auction run by mediator~$\M_i$. These auctions are connected by the demands
of the mediators and Requirement~\ref{rq}.

We next present our mediator model.
The definition of an Envy-Free Mediator, or $\m$-mediator for short,
reflects the following idea:
To determine her revenue for a set of items $S$ at central auction prices $r$,
the mediator simulates the local auction she would run if she would obtain
the set $S$ at prices $r$. Given the outcome of this ``virtual auction'',
she can compute her potential revenue for $S$ and $r$ as the difference between the
virtual auction prices of the items sold in the virtual auction
and the central auction prices for the items in $S$.
However, as motivated in the introduction, the mediator is required to represent the preferences of her
bidders and therefore not every set $S$ is ``allowed'' for the mediator,
that is, for some sets the revenue of the mediator is set to $-1$.
The sets that maximize the revenue are then in the demand of the mediator at 
central auction prices $r$.
To make the revenue of a mediator well-defined and to follow our motivation
that a mediator should satisfy her bidders, the virtual auctions specifically
compute minimal envy-free price vectors.

\begin{definition}[Envy-Free Mediator]\label{def:mediator}
An $\m$-mediator $\M_i$ determines her demand for a price vector $r \ge 0$ as follows. For 
each subset of items $S \subseteq \Omega$ she runs a virtual auction with items
$S$, her bidders $\B_i$,
and reserve prices $r$. We assume that the virtual auction computes minimal 
envy-free prices $p^S \ge r$ and a corresponding envy-free allocation $\langle S_b\rangle$.\inshort{\footnote{If there 
are multiple envy-free allocations on $S$ for the prices $p^S$, the mediator 
chooses one that maximizes $\sum_{b \in \B_i} p^S(S_b)$.}}
We extend the prices $p^S$ to all items in $\Omega$ by setting $p^S_j = r_j$
for $j \in \Omega \setminus S$,
and define the revenue $R_{i,r}(S)$ of the mediator for a set $S$ as follows.
If the allocation $\langle S_b\rangle$ is envy free for the bidders $\B_i$ and prices $p^S$ on 
$\Omega$, then $R_{i,r}(S) = \sum_{b \in \B_i} p^S(S_b) - r(S)$;
otherwise, we set $R_{i,r}(S) = -1$.\infull{\footnote{For the results of this paper this could
be any negative value including $-\infty$.}}
The demand $D_{i}(r)$ of $\M_i$ is the set
of all sets $S$ that maximize the revenue of the mediator for the reserve prices $r$.
The local auction of $\M_i$ for a set $\Omega_i$ allocated to her
in the central auction at prices $r$ is equal to her virtual auction for $\Omega_i$
and $r$.
\end{definition}

\infull{Note that for a set $S$ with $R_{i,r}(S) = \sum_{b \in \B_i} p^S(S_b) - r(S)$
the revenue of an $\m$-mediator $\M_i$ is maximal if the envy-free allocation 
on $S$ is such that $\sum_{b \in \B_i} p^S(S_b)$ is as high as possible. Thus if there 
are multiple envy-free allocations on $S$ for the prices $p^S$, the mediator 
chooses one that maximizes $\sum_{b \in \B_i} p^S(S_b)$.}

\ignore{\footnote{
We need this assumption in our proof that for \gs{} valuations
of the bidders a three-party competitive equilibrium exists. However,
in this case the assumption can easily be satisfied within a mechanism
that computes such an equilibrium by the way an $\m$-mediator can then determine
a set in her demand.}}

Following the above definition, we say that a price vector
is \emph{locally envy free} if it is envy free for the bidders $\B_i$ on
the subset $\Omega_i \subseteq \Omega$ assigned to mediator $\M_i$ and 
\emph{globally envy free} if it is envy free for the bidders $\B_i$ on $\Omega$.
Note that if $p^S$ is envy free on $\Omega$, then it is minimal envy free 
$\ge r$ on $\Omega$ for the bidders $\B_i$.

An interesting property of $\m$-mediators is that every
Walrasian equilibrium in the central auction can be combined with the
outcome of the local auctions of $\m$-mediators to form a three-party
competitive equilibrium.

\begin{theorem}\label{th:mediator}
Assume all mediators are $\m$-mediators. Then a Walrasian equilibrium in the 
central auction with allocation $\langle \Omega_i \rangle$ together with 
the allocation and prices computed in the local auctions of the mediators $\M_i$
on their sets $\Omega_i$ \textup{(}not necessarily Walrasian\textup{)} form a three-party competitive equilibrium.
\end{theorem}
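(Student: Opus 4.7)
The plan is to verify the four defining conditions of a three-party competitive equilibrium one by one, for the allocation $\langle\Omega_i\rangle$ together with the central price vector $r$ and the local price vectors $p^i$ defined by $p^i_j := p^{\Omega_i}_j$ for $j\in\Omega_i$ and $p^i_j := r_j$ for $j\notin\Omega_i$. Requirements~1 and~2 should follow immediately from the hypothesis that $(\langle\Omega_i\rangle, r)$ is a Walrasian equilibrium at the central auction: these are exactly the statements that $\Omega_i\in D_i(r)$ and that every item with $r_j>0$ is allocated to some mediator. Requirement~3 should hold by the very construction of the local price vectors. Thus the substantive work will be Requirement~4: every bidder $b\in\B_i$ must be allocated a subset of $\Omega_i$ that lies in his demand on \emph{all} of $\Omega$ at prices $p^i$.

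The hard part is that the $\m$-mediator's virtual auction on $\Omega_i$ only directly guarantees envy-freeness on $\Omega_i$ (local envy-freeness), whereas Requirement~4 asks for envy-freeness on the full set $\Omega$ at $p^i$ (global envy-freeness). The plan is to bridge this via the sentinel clause in Definition~\ref{def:mediator}: $R_{i,r}(S)$ is defined by the explicit price-difference formula exactly when the virtual auction's allocation on $S$ is globally envy-free at the extended prices, and is set to the forbidden value $-1$ otherwise (which the accompanying footnote allows to be read as $-\infty$). Thus Requirement~4 reduces to showing that $R_{i,r}(\Omega_i)$ takes the formula value, not the sentinel value.

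To establish this, I would use the witness $S^\star = \Omega$. The virtual auction on $\Omega$ produces minimal envy-free prices $p^\Omega\geq r$ with an allocation envy-free on $\Omega$ by construction; since extending $p^\Omega$ from $\Omega$ to $\Omega$ changes nothing, that allocation is automatically globally envy-free at the extension, and $R_{i,r}(\Omega)$ is formula-valued. Because $\Omega_i\in D_i(r)$, we then have $R_{i,r}(\Omega_i) = \max_S R_{i,r}(S) \geq R_{i,r}(\Omega)$, and the sentinel convention (treating $-1$ as $-\infty$, or at least as smaller than every formula value) forces $R_{i,r}(\Omega_i)$ also to be formula-valued. By the definition of $R_{i,r}$, this says exactly that the local allocation on $\Omega_i$ is globally envy-free at $p^i$; together with the trivial observation that the local allocation is contained in $\Omega_i$, this gives Requirement~4.

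The main obstacle is the conceptual one identified above: recognising that the sentinel clause of the $\m$-mediator definition is precisely the device that promotes the local envy-freeness of the virtual auction to the global envy-freeness demanded by the three-party equilibrium. Once this observation is made, the $S = \Omega$ witness and the demand-maximality of $\Omega_i$ in the central Walrasian equilibrium close the argument, with everything else reducing to direct bookkeeping from the hypotheses.
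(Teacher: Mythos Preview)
Your proposal is correct and follows essentially the same approach as the paper: verify the four conditions of a three-party competitive equilibrium directly, with the crux being that $\Omega_i \in D_i(r)$ forces the local-auction outcome to be globally envy-free via the sentinel clause in Definition~\ref{def:mediator}. The paper's own proof is even terser---it simply asserts this last implication ``by the definition of $D_i(r)$'' without spelling out your witness argument with $S^\star=\Omega$---so your version is, if anything, more careful about the point you correctly identify as the only substantive one.
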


\begin{proof}
A Walrasian equilibrium in the central auction is a price vector $r \ge 0$ and
an allocation $\langle \Omega_i \rangle$ of items to mediators such that
every item with strictly positive price is allocated to a mediator and
every mediator is allocated a set in her demand $D_{i}(r)$.
By the definition of $D_{i}(r)$, the virtual auction for every
set $S \in D_{i}(r)$ computes an
allocation of the items in $S$ to her bidders and envy-free prices $p^i \ge r$
(on $\Omega$) such that every bidder in $\B_i$ is allocated a set in his demand at
prices $p^i$ and $p^i_j = r_j$ for all items $j \notin S$. Thus all requirements
of a three-party competitive equilibrium are satisfied.
\end{proof}

Further, with $\m$-mediators a three-party competitive equilibrium exists whenever a 
Walrasian equilibrium exists for the bidders and items without the mediators.
\begin{theorem}\label{th:exist}
Assume all mediators are $\m$-mediators and a Walrasian equilibrium exists for the 
set of bidders and items \textup{(}without mediators\textup{)}. Then there exists
a three-party competitive equilibrium.
\end{theorem}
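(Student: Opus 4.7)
The plan is to construct a Walrasian equilibrium in the central auction and then invoke Theorem~\ref{th:mediator}. Let $(q,\langle\Omega_b\rangle_{b\in\B})$ be the given Walrasian equilibrium for the bidders alone. Set the central prices $r:=q$ and assign to each mediator $\M_i$ the set $\Omega_i:=\bigcup_{b\in\B_i}\Omega_b$ of items received by her bidders. Non-negativity of~$r$ and allocation of positive-price items to some mediator are inherited directly from $(q,\langle\Omega_b\rangle)$, so the only nontrivial condition is $\Omega_i\in D_i(r)$ for every~$i$, i.e., $\Omega_i$ maximises $R_{i,q}(\cdot)$.

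I would first show $R_{i,q}(\Omega_i)=0$. Each $b\in\B_i$ has $\Omega_b\in D_b(q)$ on~$\Omega$, and since $\Omega_b\subseteq\Omega_i$ this implies $\Omega_b\in D_b(q)$ on~$\Omega_i$; hence $\langle\Omega_b\rangle_{b\in\B_i}$ is envy-free on~$\Omega_i$ at prices $q|_{\Omega_i}$. As $q|_{\Omega_i}$ is itself envy-free and every envy-free vector $\ge q|_{\Omega_i}$ on~$\Omega_i$ is pointwise at least $q|_{\Omega_i}$, the minimum envy-free price vector with this reserve is $p^{\Omega_i}=q|_{\Omega_i}$. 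The tie-breaking maximum $\sum_b p^{\Omega_i}(S_b)=q(\Omega_i)$ is attained by $\langle\Omega_b\rangle_{b\in\B_i}$, which is also globally envy-free at the extension (which is just~$q$). Hence $R_{i,q}(\Omega_i)=q(\Omega_i)-q(\Omega_i)=0$.

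The crux is to show $R_{i,q}(S)\le 0$ for every $S\subseteq\Omega$. When global envy-freeness of the virtual-auction output on~$S$ fails, $R_{i,q}(S)=-1\le 0$ by definition; otherwise let $\langle S_b\rangle$ and~$p$ (the extension of~$p^S$ by~$q$ on~$\Omega\setminus S$) be the virtual-auction outputs. For each $b\in\B_i$, adding the demand inequality $v_b(S_b)-p(S_b)\ge v_b(\Omega_b)-p(\Omega_b)$ (from $S_b\in D_b(p)$ globally) to $v_b(\Omega_b)-q(\Omega_b)\ge v_b(S_b)-q(S_b)$ (from $\Omega_b\in D_b(q)$ globally) yields $(p-q)(\Omega_b)\ge (p-q)(S_b)$. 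Summing over~$\B_i$ and using the disjointness of the~$\Omega_b$'s and of the~$S_b$'s together with $p-q$ vanishing on~$\Omega\setminus S$, one obtains $(p-q)(\Omega_i\cap S)\ge (p-q)(\bigcup_b S_b)$. Since $R_{i,q}(S)=(p-q)(\bigcup_b S_b)-q(S\setminus\bigcup_b S_b)$, this gives $R_{i,q}(S)\le (p-q)(\Omega_i\cap S)-q(S\setminus\bigcup_b S_b)$. The hardest step is to eliminate the residual term $(p-q)(\Omega_i\cap S)$: I would show that any strict excess $p_{j^*}>q_{j^*}$ with $j^*\in\Omega_i\cap S$ contradicts the minimality of~$p^S$, by lowering $p_{j^*}$ to~$q_{j^*}$ and using the bidders' flexibility at~$q$ (each $b\in\B_i$ has $\Omega_b$ in demand) to exhibit an envy-free allocation on~$S$ at the reduced prices. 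This forces $p=q$ on~$\Omega_i\cap S$, whence $R_{i,q}(S)\le -q(S\setminus\bigcup_b S_b)\le 0$.

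With $\Omega_i\in D_i(q)$ for every~$i$, the pair $(r,\langle\Omega_i\rangle_i)$ is a Walrasian equilibrium in the central auction with $\m$-mediators, and Theorem~\ref{th:mediator} delivers the desired three-party competitive equilibrium.
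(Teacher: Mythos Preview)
Your overall plan coincides with the paper's: both set the central prices to the given Walrasian prices $r:=q$, take $\Omega_i=\bigcup_{b\in\B_i}\Omega_b$, observe that $p^{\Omega_i}=q|_{\Omega_i}$ and hence $R_{i,q}(\Omega_i)=0$, and conclude that $(\langle\Omega_i\rangle,q)$ is a central Walrasian equilibrium. The paper then checks the three-party conditions directly, while you route through Theorem~\ref{th:mediator}; this difference is cosmetic. Where the paper simply \emph{asserts} that ``for each mediator the set allocated to her is in her demand'' (with only the informal footnote about ``no competition'' as justification), you attempt to actually verify $R_{i,q}(S)\le 0$ for every~$S$---which is exactly the content of that assertion.

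Your derivation up to $R_{i,q}(S)\le (p-q)(\Omega_i\cap S)-q(S\setminus\bigcup_b S_b)$ is correct and clean. The genuine gap is your ``hardest step'', the claim that $p=q$ on $\Omega_i\cap S$. Your sketched argument---lower $p_{j^*}$ to $q_{j^*}$ and invoke that each $b\in\B_i$ has $\Omega_b\in D_b(q)$ to exhibit an envy-free allocation on~$S$ at the reduced price---does not go through as stated. The sets~$\Omega_b$ need not be contained in~$S$, so neither $\langle\Omega_b\rangle$ nor $\langle\Omega_b\cap S\rangle$ is available as an envy-free allocation on~$S$ at prices~$q|_S$ (for $\Omega_b\cap S$ one would need $v_b(\Omega_b)-v_b(\Omega_b\cap S)\le q(\Omega_b\setminus S)$, which nothing in the hypotheses guarantees). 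Nor is $\langle S_b\rangle$ itself envy-free at~$q|_S$: for $T\supseteq S_b$ with $T\subseteq S$ containing an item where $p>q$, the inequality $v_b(S_b)-p(S_b)\ge v_b(T)-p(T)$ does not yield $v_b(S_b)-q(S_b)\ge v_b(T)-q(T)$. Lowering a single coordinate of a minimal envy-free price vector does not in general preserve envy-freeness, and the theorem does not assume gross substitutes, so there is no lattice or single-improvement structure to appeal to. You have correctly isolated the one nontrivial point in the argument; the paper's own proof is silent here beyond intuition, so your write-up is more honest about where the work lies, but the step still needs a real proof.
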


\begin{proof}
A Walrasian equilibrium  is a price vector $r \ge 0$ and
an allocation $\langle \Omega_b \rangle$ of items to bidders such that
every bidder is envy-free and all items with non-zero price are allocated
to a bidder. This equilibrium induces a trivial three-party competitive 
equilibrium where all price vectors are identical to $r$ and the allocation to
mediators is uniquely determined by the allocation~$\langle \Omega_b \rangle$ 
to bidders. To see this note that the allocation $\langle \Omega_b 
\rangle$ with prices $r$ is globally envy-free for all bidders and thus for a 
mediator $\M_i$ the minimal locally envy-free prices $\ge r$ are equal to $r$ for 
the set of items allocated to $\B_i$. The revenue of all mediators under this 
equilibrium is zero and for each mediator the set allocated to her is in her 
demand.\footnote{The above proof also holds for 
any other mediator definition that prohibits mediators to gain other revenue  
than from the competition between her bidders in the local auction. This is 
because there is no competition in 
the local auction when the allocation and prices in the central auction are 
determined by a Walrasian equilibrium between bidders and items.}
\end{proof}

The proof of Theorem~\ref{th:exist} only shows the existence of trivial 
three-party equilibria
that basically ignores the presence of mediators. However, three-party equilibria
and $\m$-mediators allow for richer outcomes that permit the mediators to
gain revenue from the competition between their bidders while still representing
the preferences of their bidders towards the central seller. 
In the next section we show how to find such an equilibrium provided that the
valuations of all bidders are gross substitute. Recall that gross-substitute
valuations are the most general valuations that include unit demand valuations
for which a Walrasian equilibrium exists~\cite{GS1999}; and that efficient
algorithms for finding a Walrasian equilibrium are only known for this valuation class.

\section{An Efficient Algorithm for Gross-substitute Bidders}\label{sec:alg}

In this section we will show how to find, in polynomial time, a three-party 
competitive equilibrium if the valuations of all bidders are gross substitute. 
The prices the bidders have to pay at equilibrium, and thus the utilities 
they achieve, will be the same as in a Walrasian equilibrium (between bidders 
and items) with minimum prices 
(Section~\ref{sec:minWEprices}). The price
the bidders pay is split between the mediators and the exchange. We show
how to compute an equilibrium where this split is best for the mediators
and worst for the exchange. In turn the computational load can be split
between the mediators and the exchange as well.
The algorithm will be based on existing algorithms to compute Walrasian
equilibria for gross-substitute bidders.

The classical (two-party) \emph{allocation problem} is the following:
We are given $k$~items and $n$~valuation functions and we should find
an equilibrium allocation (with or without equilibrium prices) if one exists.
Recall that in general a valuation function has a description of size 
exponential in~$k$. Therefore, the input valuation functions can only be 
accessed via an \emph{oracle}, defined below. 
An \emph{efficient} algorithm runs in time polynomial in $n$ and $k$ (where 
the oracle access is assumed to take constant time).

Given an algorithm that computes a Walrasian allocation for gross-substitute 
bidders, by a result of Gul and Stacchetti~\cite{GS1999} minimum Walrasian 
prices can be computed by solving the allocation problem 
$k+1$~times. A Walrasian allocation can be combined with any 
Walrasian prices to form a Walrasian equilibrium~\cite{GS1999}.
Thus we can assume for gross-substitute valuations 
that a polynomial-time algorithm for the allocation problem also returns a
vector of minimum prices that support the allocation. 

Two main oracle definitions that were considered in the literature are the
\emph{valuation oracle}, where a query is a set of items~$S$ and the oracle
replies with the exact value of~$S$; and the \emph{demand oracle}, where a 
query is a price vector~$p$ and the oracle replies with a demand 
representative~$D$~\cite{AGT}.
Note that in the literature the answer of a demand oracle is sometimes defined 
to be all sets in the demand, however this cannot be assumed to be of
polynomial size even for gross-substitute valuations.

It is known that a demand oracle is strictly
stronger than a valuation oracle, i.e., a valuation query can be
simulated by a polynomial number of demand queries but not vice versa.
For gross-substitute valuations, however, these two query models are polynomial-time equivalent,
see Paes~Leme~\cite{PaesLeme2014}.
The two-party allocation problem is efficiently solvable for gross-substitute
valuations~\cite{NisanS2006,PaesLeme2014}.
For other valuations efficient algorithms are not known even in the demand
query model.

We define the \emph{three-party allocation problem} in the same manner.
We are given $k$ items, $n$ valuation functions over the subsets of items
and $m$ mediators,
each associated with a set of unique bidders. We are looking for
a three-party equilibrium allocation (and equilibrium prices) if one exists.
We will assume that the input valuations are given through a valuation oracle.
An efficient algorithm runs in time polynomial in $n$ and $k$ (hence also in $m \leq n$).

The algorithm will be based on the following central result: For gross
substitute valuations of the bidders an $\m$-mediator
and an $\OR$-player over the valuations of the same bidders are 
equivalent with respect to their demand and their allocation of items to 
bidders. Thus in this case $\m$-mediators can be considered as if they have a 
gross-substitute valuation.
Note that for general valuations this equivalence does not hold.

\begin{theorem}\label{th:equal}
If the valuation functions of a set of bidders $\B_i$ are gross substitute,
then the demand of an $\m$-mediator for $\B_i$ is equal to the demand of an 
$\OR$-player for $\B_i$.
Moreover, the allocation in a virtual auction of the $\m$-mediator for reserve
prices $r$ and a set of items $S$ in the demand is an optimal allocation for the 
$\OR$-player for $S$ and $r$ and vice versa.
\end{theorem}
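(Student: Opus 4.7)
My plan is to reduce the $\m$-mediator's virtual auction to a Walrasian-type equilibrium for $v_\OR$ with reserve prices $r$, from which the equivalence of demands and of allocations can be read off directly.

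First, by Lehmann, Lehmann and Nisan the $\OR$ of GS valuations is GS, so $v_\OR$ itself is gross substitute. Standard GS theory (applied to the shifted valuations $v_b(S)-r(S)$, which remain GS) produces a \emph{reserve-$r$ Walrasian equilibrium} for the bidders $\B_i$ on $\Omega$: an allocation $\langle S^*_b\rangle$ with support $S^*:=\bigcup_b S^*_b$ and prices $p^*\ge r$ satisfying $p^*_j=r_j$ for $j\notin S^*$, such that every bidder is globally envy-free at $p^*$, the set $S^*$ lies in the $\OR$-player's demand at $r$, and $\langle S^*_b\rangle$ is welfare-maximizing on $S^*$. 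For GS the minimum such prices are unique (the complete-lattice structure of Walrasian prices recalled in the preliminaries), and the resulting bidder utilities $u_b$ depend only on $r$ and $\{v_b\}$, not on the choice of $S^*$.

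Next, for every $S^*$ in the $\OR$-demand I would show that the mediator's virtual auction on $S^*$ reproduces exactly $p^*|_{S^*}$. Local envy-freeness on $S^*$ is inherited from the global envy-freeness of $p^*$. Minimality holds by a lattice argument: any envy-free $q\ge r$ on $S^*$ with $q\le p^*|_{S^*}$ and $q\ne p^*|_{S^*}$, extended by $r$ outside $S^*$, would still be globally envy-free on $\Omega$ (since $S^*$ is in the $\OR$-demand, no bidder strictly gains by adding an outside item at price $r$), contradicting minimality of $p^*$. Global envy-freeness of the extended $p^{S^*}$ is then immediate, so $R_{i,r}(S^*)\ne -1$. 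From envy-freeness at $p^*$,
\[
\sum_b p^*(S^*_b) \;=\; v_\OR(S^*) - U(r),
\]
where $U(r):=\sum_b u_b$ is constant in $S^*$. Hence $R_{i,r}(S^*) = \bigl(v_\OR(S^*) - r(S^*)\bigr) - U(r)$, which is exactly the $\OR$-player's objective at $r$ up to an additive constant. This places the $\OR$-demand inside the $\m$-mediator's demand, and the allocations coincide because both produce welfare-maximizing partitions of $S^*$ and the tie-breaking rule in Definition~\ref{def:mediator} on $\sum_b p^{S^*}(S^*_b)$ keeps us among welfare maximizers.

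For the converse I would show that any $S$ not in the $\OR$-demand satisfies $R_{i,r}(S)<R_{i,r}(S^*)$. If $R_{i,r}(S)=-1$ we are done; otherwise applying the same analysis with $\Omega$ replaced by $S$ shows that the support $T$ of the virtual auction on $S$ lies in the $\OR$-demand \emph{restricted to $S$}, giving
\[
R_{i,r}(S) \;=\; \bigl(v_\OR(T)-r(T)\bigr) - U(r) - r(S\setminus T) \;<\; \bigl(v_\OR(S^*)-r(S^*)\bigr) - U(r),
\]
because $S\notin\OR\text{-demand on }\Omega$ forces either $v_\OR(T)-r(T)<v_\OR(S^*)-r(S^*)$ or $r(S\setminus T)>0$. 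The main obstacle will be pinning down the two lattice-theoretic facts used throughout: that $U(r)$ is genuinely independent of the particular $S^*$ chosen from the $\OR$-demand, and that lowering any single coordinate of $p^*|_{S^*}$ necessarily breaks either market clearing or envy-freeness. Both hinge on careful use of the complete-lattice structure of (reserve-)Walrasian prices for GS valuations and on the equivalent demand-oracle characterization of GS recalled in Section~\ref{sec:prelim}.
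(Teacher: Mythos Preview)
Your high-level architecture matches the paper's: both reduce the claim to showing that the $\m$-mediator's virtual auction on a set~$S$ computes a minimum \newwe{} on~$S$, then link revenue to $u_{\OR,r}$ up to an additive constant via the \newlp{}. You also correctly locate the hard step and flag it as ``the main obstacle.''

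However, your proposed lattice argument for that step does not go through as written. You argue that if $q\le p^*|_{S^*}$ is locally envy free on $S^*$ with $q\ne p^*|_{S^*}$, then the extension $(q,r)$ is globally envy free on $\Omega$, ``since $S^*$ is in the $\OR$-demand, no bidder strictly gains by adding an outside item at price~$r$.'' That justification is about the allocation $\langle S^*_b\rangle$ at the prices $p^*$; once you lower coordinates inside $S^*$, the locally envy-free allocation at $q$ need not coincide with $\langle S^*_b\rangle$, and for the new bundles there is no reason an outside item at price $r_j$ is undesirable. Even if the extension were globally envy free, that would only contradict minimality of $p^*$ among \emph{envy-free} prices $\ge r$, whereas $p^*$ was introduced as the minimum \newwe{} price vector; showing these two minima coincide is exactly the content you are trying to establish, so the argument is circular at this point. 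The paper proves this equality (Lemma~\ref{lem:minef}) by a genuinely different route: it assumes an envy-free $p\not\ge p^*$, perturbs $p^*$ downward on $J=\{j:p_j<p^*_j\}$, uses LP duality to produce an over-demanded set at the perturbed prices via the requirement function $f_{b,q}(S)=\min_{D\in D_b(q)}|D\cap S|$, and then transports over-demandedness to $p$ using Gul--Stacchetti's monotonicity of $f$ under the single-improvement property. No lattice shortcut replaces this step.

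A second, related gap: in your converse you write $R_{i,r}(S)=(v_{\OR}(T)-r(T))-U(r)-r(S\setminus T)$ with the \emph{same} $U(r)$ as for $S^*$. That identity needs $p^S=p^\Omega$ whenever $R_{i,r}(S)\ne -1$ (the paper's Corollary~\ref{cor:priceeq}), which in turn rests on Lemma~\ref{lem:efminef}. So this formula, and hence the strict inequality you derive from it, is not available until the first gap is closed. Once Lemma~\ref{lem:minef} is in place, the rest of your outline is essentially the paper's Lemmata~\ref{lem:MtoOR} and~\ref{lem:ORtoM}.
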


To this end, we will first show for the virtual (and local) auctions that a modified Walrasian 
equilibrium, the \newwe{}, exists for gross-substitute valuations with reserve prices.
For this we will use yet another reduction to a (standard) Walrasian equilibrium
without reserve prices but with an additional additive player\footnote{Such a player 
was introduced by Paes~Leme~\cite{PaesLeme2014} to find
the demand of an $\OR$-player (with a slightly different definition of $\OR$).}.

\begin{definition}[Walrasian equilibrium with reserve prices $r$
\textup{(}\newwe{}\textup{)}~\cite{GuruswamiHaKaKeKeMc2005}]
A Walrasian equilibrium with reserve prices $r \ge 0$ \textup{(}on $\Omega$\textup{)} is an 
envy-free allocation $\langle \Omega_b\rangle$ \textup{(}on $\Omega$\textup{)} with prices 
$p$ such that $p\geq r$, 
and the price of$\ $every unallocated item is equal to its reserve price,
i.e., $p_j = r_j$ for $j \not\in \cup_b \Omega_b$. We say that $\langle \Omega_b\rangle$
is a \newwe{} allocation \textup{(}on $\Omega$\textup{)} and $p$ are \newwe{} prices \textup{(}on $\Omega$\textup{)}.
\end{definition}

\subsection{Properties of Walrasian Equilibria with Reserve Prices}\label{sec:we}
In this section we generalize several results about Walrasian equilibria
to Walrasian equilibria with reserve prices. Similar extensions were shown
for unit demand valuations in~\cite{GuruswamiHaKaKeKeMc2005}.

We first define a suitable linear program. The \newlp{}, shown below, is a
linear program obtained from a reformulation of the dual of the LP-relaxation 
of the welfare maximization integer program after adding reserve prices $r \ge 0$. 
\newlength{\suml}
\settowidth{\suml}{$\scriptstyle b \in \B,\, S \mid j \in S$}
\newlength{\maximize}
\settowidth{\maximize}{subject to}
\begin{IEEEeqnarray*}{0s"l}
	maximize & \sum_{b \in \B,\, S \subseteq \Omega}
	 x_{b,S}  v_b(S) + \sum_{j \in \Omega} 
	\left(1-\sum_{b \in \B,\, S \mid j \in S} x_{b, S}\right) r_j 
	\\
	\\
	subject to & 
	\begin{IEEEeqnarraybox}[][t]{cRCl"lL}
		\sum_{b \in \B,\, S \mid j \in S} & x_{b, S} & \le & 1 & \forall j & \in \Omega\\
		\sum_{S \subseteq \Omega} & x_{b, S} & \le & 1 & \forall b & \in \B\\
		& x_{b,S} & \ge & 0 & \forall b &\in \B,\, S\subseteq \Omega
	\end{IEEEeqnarraybox}
\end{IEEEeqnarray*}

We now show how the \newlp{} is obtained.
It is well known that for any collection $\{v\}$ of valuations a Walrasian
equilibrium (\we{}) exists if and only if the linear programming relaxation 
of the welfare maximization problem (\lp{}), given below, has an integral
solution. The integral solution combined with optimal dual prices yields a Walrasian 
equilibrium and vice versa (see e.g.~\cite{BikhchandaniMa1997} for monotone
valuations and~\cite{MWG1995} for more general valuations).

\begin{IEEEeqnarray*}{0s"l}
	maximize & \sum_{b \in \B,\, S \subseteq \Omega} x_{b,S} v_b(S) 
	\IEEEyesnumber \ztag{\lp} \label{LP}
	\\
	subject to &
	\begin{IEEEeqnarraybox}[][t]{cRCl"lL}
		\sum_{b \in \B,\, S \mid j \in S} & x_{b, S} & \le & 1 & \forall j & \in \Omega\\
		\sum_{S \subseteq \Omega} & x_{b, S} & \le & 1 & \forall b & \in \B\\
		& x_{b,S} & \ge & 0 & \forall b & \in \B,\, S\subseteq \Omega
	\end{IEEEeqnarraybox}
\end{IEEEeqnarray*}
The dual is as follows.
\begin{IEEEeqnarray*}{0s"l}
	minimize & \sum_{b \in \B} u_b + \sum_{j \in \Omega} p_j \\
	subject to &
	\begin{IEEEeqnarraybox}[][t]{RCl"lL}
		u_b + \sum_{j \in S} p_j & \ge & v_b(S) & 
		\forall b & \in \B,\, S \subseteq \Omega\\
		u_b & \ge & 0 & \forall b & \in \B\\ 
		p_j & \ge & 0 & \forall j & \in \Omega
	\end{IEEEeqnarraybox}
\end{IEEEeqnarray*}
We will think of the dual variables $p_j$s as prices of items and of
$u_b$s as maximum utilities for the bidders. Note that the dual objective is
a function of the $p$s as the $u$s are determined by them.
Now consider the effect of reserve prices, i.e., for all $j \in \Omega$ a 
lower bound $r_j \ge 0$ for the dual variables $p_j$.
\begin{IEEEeqnarray*}{0s"l}
	minimize & \sum_{b \in \B} u_b + \sum_{j \in \Omega} p_j \\
	subject to &
	\begin{IEEEeqnarraybox}[][t]{RCl"lL}
		u_b + \sum_{j \in S} p_j & \ge & v_b(S) & 
		\forall b & \in \B,\, S \subseteq \Omega\\
		u_b & \ge & 0 & \forall b & \in \B\\ 
		p_j & \ge & r_j & \forall j & \in \Omega
	\end{IEEEeqnarraybox}
\end{IEEEeqnarray*}
We can reformulate this linear program by a variable transformation with $q_j = p_j - r_j$
for all $j \in \Omega$. The term $\sum_{j \in \Omega} r_j$ is part of the input and thus
can be omitted from the objective value.
\begin{IEEEeqnarray*}{0s"l}
	minimize & \sum_{b \in \B} u_b + \sum_{j \in \Omega} q_j 
	+ \sum_{j \in \Omega} r_j \\
	subject to &
	\begin{IEEEeqnarraybox}[][t]{RCl"lL}
		u_b + \sum_{j \in S} q_j & \ge & v_b(S) - \sum_{j\in S} r_j & 
		\forall b & \in \B,\, S \subseteq \Omega\\
		u_b & \ge & 0 & \forall b & \in \B\\ 
		q_j & \ge & 0 & \forall j & \in \Omega
	\end{IEEEeqnarraybox}
\end{IEEEeqnarray*}
With this reformulation we obtain the following primal, which we call
\newlp{}. Again $\sum_{j \in \Omega} r_j$ can be omitted from the objective value
without changing the set of solutions.
\begin{IEEEeqnarray*}{0s"l}
	maximize & \sum_{b \in \B,\, S \subseteq \Omega}
	x_{b,S} \left(v_b(S) - \sum_{j\in S} r_j\right) + \sum_{j \in \Omega} r_j 
	\IEEEyesnumber \ztag{\newlp} \label{qLP}
	\\
	\\
	subject to & 
	\begin{IEEEeqnarraybox}[][t]{cRCl"lL}
		\sum_{b \in \B,\, S \mid j \in S} & x_{b, S} & \le & 1 & \forall j & \in \Omega\\
		\sum_{S \subseteq \Omega} & x_{b, S} & \le & 1 & \forall b & \in \B\\
		& x_{b,S} & \ge & 0 & \forall b &\in \B,\, S\subseteq \Omega
	\end{IEEEeqnarraybox}
\end{IEEEeqnarray*}
The objective value of the \newlp{} can be rewritten as
\begin{equation}\label{eq:add}
\sum_{b \in \B,\, S \subseteq \Omega}
x_{b,S} v_b(S) + \sum_{j \in \Omega} \left(1-\sum_{b \in \B,\, S \mid j \in S} x_{b, S}\right) r_j \,.
\end{equation}

For an integral solution to the \newlp{} we can interpret this reformulation
as a solution to a \lp{} with an additional additive player whose value for 
an item  is equal to that item's reserve price. We will use this interpretation 
to extend known results for Walrasian equilibria to Walrasian equilibria with
reserve prices. The results are summarized in Theorem~\ref{th:reserve} below.
We use the following definition.
\begin{definition}[additional additive player]\label{def:additive}
Let $\{v_b\}$ be a set of valuation functions over $\Omega$ for 
bidders $b \in \B$, and let $r \ge 0$ be reserve prices for the items 
in $\Omega$. Let $\{v'_{b'}\}$ be the set of valuation functions when 
an additive bidder~$a$ is added, i.e., for the bidders 
$b' \in \B' = \B \cup \set{a}$ with $v'_{b'}(S) = v_{b'}(S)$ for ${b'} \ne a$ 
and $v'_a(S) = \sum_{j \in S} r_j$ for all sets $S \subseteq \Omega$.
For an allocation $\langle \Omega_b \rangle_{b \in \B}$ we define $\langle 
\Omega'_{b'} \rangle_{b' \in \B'}$ with $\Omega'_{b'} = \Omega_{b'}$ for 
${b'} \ne a$ and $\Omega'_a = \Omega \setminus \cup_b \Omega_b$.
\end{definition}

Theorem~\ref{th:reserve} will be used in the next section to characterize the 
outcome of the virtual auctions of an $\m$-mediator. It also provides a 
polynomial-time algorithm to compute a \newwe{} when the bidders in~$\B$ have 
gross-substitute valuations, given a polynomial-time algorithm for a \we{} for 
gross-substitute bidders.
\begin{theorem}\label{th:reserve}
\textup{(}a\textup{)} 
The allocation $\langle \Omega_b \rangle$ and the prices $p$ are a \newwe{} 
for $r \ge 0$ and bidders~$\B$ if and only if the allocation 
$\langle \Omega'_{b'} \rangle$ and prices $p'$ are a \we{} for the bidders~$\B'$, where we have $p_j = p'_j$ for $j \in \cup_{b \in \B} \Omega_b$ and 
$p_{j'} = r_{j'}$ for ${j'} \in \Omega \setminus \cup_{b \in \B} \Omega_b$
\textup{(}a1\textup{)}.
The allocation $\langle \Omega_b \rangle$ is a \newwe{} allocation if and only
if $\langle \Omega_b \rangle$ is an integral solution to the \newlp{} \textup{(}a2\textup{)}.

\textup{(}b\textup{)} If the valuations~$\{v\}$ are gross 
substitute, then \textup{(}b1\textup{)} there exists a \newwe{} for~$\{v\}$
and \textup{(}b2\textup{)} the \newwe{} price vectors form a complete lattice.
\end{theorem}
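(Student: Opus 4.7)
The plan is to systematically reduce each claim to a statement about a standard Walrasian equilibrium on the extended bidder set $\B'$ from Definition~\ref{def:additive}.

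For~(a1), I would prove the equivalence by constructing the bijection in both directions. The demand of the additive bidder $a$ at a price vector $p'$ consists of all $S$ with $\{j : p'_j < r_j\} \subseteq S \subseteq \{j : p'_j \le r_j\}$; hence in any Walrasian equilibrium $(p', \langle \Omega'_{b'}\rangle)$ for $\B'$, every item in $\Omega'_a$ satisfies $p'_j \le r_j$ while every item in $\cup_b \Omega_b$ satisfies $p'_j \ge r_j$. Forward direction: starting from a \newwe{} $(p, \langle \Omega_b\rangle)$, set $p' = p$ and $\Omega'_a = \Omega \setminus \cup_b \Omega_b$; since $p = r$ on $\Omega'_a$ by the \newwe{} condition, $a$ is envy-free and the other bidders' envy-freeness is unchanged, so $(p, \langle \Omega'_{b'}\rangle)$ is a \we{} for $\B'$. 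Reverse direction: given a \we{} $(p', \langle \Omega'_{b'}\rangle)$ for $\B'$, set $p = p'$ on $\cup_b \Omega_b$ and $p = r$ on $\Omega'_a$. Then $p \ge p'$ pointwise while $p = p'$ on every $\Omega_b$, so for any $T \subseteq \Omega$ we have $v_b(T) - p(T) \le v_b(T) - p'(T)$, whereas $v_b(\Omega_b) - p(\Omega_b) = v_b(\Omega_b) - p'(\Omega_b)$; envy-freeness at $p'$ thus transfers to $p$, and $p \ge r$ follows from $a$'s envy-freeness on $\cup_b \Omega_b$.

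For~(a2), combine~(a1) with the classical equivalence between integral \lp{}-solutions and Walrasian allocations~\cite{BikhchandaniMa1997} applied to $\B'$. The reformulation already carried out in the text shows that the integral solutions to \newlp{} are in bijection with integral feasible solutions of \lp{} on $\B'$, where $a$ is assigned exactly the leftover items (this is welfare-maximal for $r \ge 0$ since $a$'s marginal value for $j$ is $r_j \ge 0$). Chaining the equivalences yields~(a2). Part~(b1) then follows immediately: non-negative additive valuations are gross substitute, hence $\B'$ still consists entirely of gross-substitute bidders; Kelso--Crawford~\cite{KelsoCr1982} guarantees a \we{} for $\B'$, and~(a1) converts it into a \newwe{}.

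For~(b2), the cleanest approach is to characterize the set $L$ of \newwe{} price vectors intrinsically. I claim $L = \{p' \in L' : p' \ge r\}$, where $L'$ is the set of Walrasian price vectors for $\B'$. One inclusion follows from~(a1) by using $p' = p$ and extending the allocation with $\Omega'_a = \Omega \setminus \cup_b \Omega_b$; the other uses that for $p' \in L'$ with $p' \ge r$, on every item of $\Omega'_a$ the envy-freeness of $a$ forces $p' \le r$, hence $p' = r$ there, so the restriction of the allocation to $\B$ is a \newwe{} at prices $p'$. Since by~\cite{GS1999} the set $L'$ is a complete lattice under pointwise min and max, and both operations preserve the constraint $p \ge r$, the set $L$ is closed under arbitrary pointwise suprema and infima in $L'$ and is therefore itself a complete lattice. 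The main subtlety is precisely this last step: because the bijection of~(a1) depends on the choice of allocation, one needs the intrinsic description $L = \{p' \in L' : p' \ge r\}$ before transporting the lattice structure from $L'$ to $L$.
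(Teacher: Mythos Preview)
Your proof is correct and, for parts~(a1), (a2), and~(b1), follows exactly the paper's approach: reduce to the auxiliary bidder set~$\B'$ with the additive player, and transport the classical results back.

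For~(b2) you take a slightly different route that is worth noting. The paper argues closure under pairwise $\min/\max$ directly: given \newwe{} prices $p_1,p_2$, it passes to \we{} prices $p'_1=p_1$, $p'_2=p_2$ for~$\B'$, takes $\min$ and $\max$ there, and converts back via the allocation-independent formula $p=\max(p',r)$; since $p_1,p_2\ge r$, this last step is the identity. Your version instead establishes the intrinsic description $L=\{p'\in L':p'\ge r\}$ and observes that intersecting a complete lattice with an upper set closed under pointwise $\sup/\inf$ yields a complete lattice. This is the same mechanism at bottom, but your phrasing makes explicit why the allocation-dependence of the bijection in~(a1) is harmless---a point the paper handles only implicitly through the formula $p=\max(p',r)$. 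One small omission in your argument: when you pick a \we{} allocation for $p'\in L'$, it may leave items unallocated; you should note that any such item has $p'_j=0=r_j$ (using $p'\ge r\ge 0$) and can be absorbed into~$\Omega'_a$ without breaking envy-freeness, so the allocation can be taken in the form required by Definition~\ref{def:additive}.
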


\begin{proof}[Proof of Theorem~\ref{th:reserve}]
(a1) $\Rightarrow$: Let $\langle \Omega_b \rangle$ and prices $p$ be a 
\newwe{} for bidders~$\B$. Then $\langle \Omega_b \rangle$ is an envy-free
allocation at prices $p \ge r$, and all unallocated items $j$ have price 
$p_j = r_j$. Let $\Omega_0$ denote the set of unallocated items. 
A \we{} for the bidders $\B'$ is given by prices $p$ and allocation 
$\langle \Omega'_{b'} \rangle$ with $\Omega'_{b'} = \Omega_{b'}$ for $b' \ne a$
and $\Omega'_a = \Omega_0$. All items are allocated in $\langle \Omega'_{b'} 
\rangle$. The allocation for the bidders $b' \ne a$ clearly is envy-free as
neither allocation nor prices were changed. Bidder $a$ is envy-free because 
$p \ge r$ and $p_j = r_j$ for $j \in \Omega'_a$.

$\Leftarrow$: Let $\langle \Omega'_{b'} \rangle$ and prices $p'$ be a \we{}
for the bidders~$\B'$. Then $\langle \Omega'_{b'} \rangle$ is an envy-free
allocation and all unallocated items have a price of zero. For bidder $a$ to
be envy-free it must hold that all items~$j$ not allocated to~$a$ have a 
price $p_j \ge r_j$ and all items~$j'$ allocated to~$a$ have $p_{j'} \le r_{j'}$.
We construct a \newwe{} for the bidders $\B$ as follows. For all items allocated
to bidders in $\B$ in $\langle \Omega'_{b'} \rangle$ allocation and prices remain
the same. For all other items $j$ their price is set to $r_j$ and they are left
unallocated. The allocation for the bidders $\B$ remains envy-free because
the prices of the now unallocated items were only increased.

(a2): First note that for bidders $\B'$ in a \we{}, and therefore in an 
integral solution to the \lp{}, we can assume w.l.o.g.\ that all items are
allocated because we have $r \ge 0$ and therefore all otherwise unallocated
items can be allocated to the additive player~$a$.
The objective value of the \lp{} for an integral solution 
$\langle \Omega'_{b'} \rangle$ for bidders $\B'$
can be written as $\sum_{b \in \B} v_b(\Omega_b) + r(\Omega'_a)$, which
is, w.l.o.g., equal to $\sum_{b \in \B} v_b(\Omega_b) + r(\Omega \setminus \cup_{b 
\in \B} \Omega_b)$. The latter is equivalent to Equation~\eqref{eq:add} for 
the allocation $\langle \Omega_{b} \rangle$. Thus there is (w.l.o.g.) 
a one-to-one correspondence between integral solutions to the \lp{} for 
bidders~$\B'$ and integral solutions to the \newlp{} for bidders~$\B$. Hence,
$\langle \Omega'_{b'} \rangle$ is an optimal solution to the \lp{} for $\B'$
if and only if $\langle \Omega_{b} \rangle$ is an optimal solution to the
\newlp{} for $\B$. Note that the corresponding constraints are satisfied as both
$\langle \Omega'_{b'} \rangle$ and $\langle \Omega_{b} \rangle$ are allocations, 
respectively. To complete the proof, consider the following chain of ``iff'' statements.
{
\begin{align*}
(\langle \Omega_b \rangle, p) \text{ is a \newwe{} for } \B &\Longleftrightarrow 
(\langle \Omega'_b \rangle, p') \text{ is a \we{} for } \B' \,,\\
(\langle \Omega'_b \rangle, p') \text{ is a \we{} for } \B' 
&\Longleftrightarrow 
(\langle \Omega'_b \rangle) \text{ solves \lp{} for } \B'\,,\\
(\langle \Omega'_b \rangle) \text{ solves \lp{} for } \B'
&\Longleftrightarrow 
(\langle \Omega_b \rangle) \text{ solves \newlp{} for } \B\,,\\
\text{and thus} &\\
(\langle \Omega_b \rangle, p) \text{ is a \newwe{} for } \B &\Longleftrightarrow 
(\langle \Omega_b \rangle) \text{ solves \newlp{} for } \B\,.
\end{align*}}

(b1): The valuations $\set{v}$ of the bidders in $\B$ are gross substitute
if and only if the valuations $\set{v'}$ of the bidders in $\B'$ are gross 
substitute, as the only difference between $\B$ and $\B'$ is the additive
bidder $a$ whose value for an item $j$ is equal to its reserve price $r_j \ge 0$.
Recall that every (non-negative) additive valuation is gross substitute.
The claim then directly follows from (a1) and the existence of a \we{} 
for $\set{v'}$.

(b2): To show that the \newwe{} price vectors form a complete lattice, we
have to show that for any two \newwe{} price vectors $p_1$ and $p_2$ the
price vectors $\min(p_1, p_2)$ and $\max(p_1, p_2)$, where the $\min$ and the 
$\max$ is meant element-wise, are \newwe{} price vectors as well. We will
use (a1) and that for gross-substitute valuations \we{} price vectors form
a complete lattice. The latter implies that for two \we{} price vectors $q'_1$ 
and $q'_2$, we have that $q'_\text{min} = \min(q'_1, q'_2)$ and $q'_\text{max} =
\max(q'_1, q'_2)$ are \we{} price vectors as well. Recall the relation of 
$p'$ and $p$ in (a1), i.e., $p = \max(p', r)$. By (a1) we 
have that (i) $p'_1$ and $p'_2$ are \we{} price vectors for $\B'$
and (ii) $q_\text{min}$ and $q_\text{max}$ are \newwe{} price vectors for $\B$.
Let $q'_1 = p'_1$ and let $q'_2 = p'_2$. The claim follows from 
$\min(p_1, p_2) = q_\text{min}$ and $\max(p_1, p_2) = q_\text{max}$.
\end{proof}

\subsection{The Equivalence of the EF-mediator and the OR-player for 
Gross-substitute Valuations}\label{sec:equal}

In this section we prove Theorem~\ref{th:equal}, that is,
the equivalence, for gross-substitute bidders, between the demand of 
an $\m$-mediator $\M_i$ and the demand of an $\OR$-player and, for the sets in 
the demand, the equivalence of the \emph{allocations}
of items to bidders of an $\OR$-player and an $\m$-mediator in the sense that
the allocation implied by the $\OR$-player could be used
by the $\m$-mediator and vice versa. 

If the valuations of the bidders in $\B_i$ are
all  gross substitute, by Theorem~\ref{th:reserve}~(b) a \newwe{} 
with minimum prices exists for all reserve prices $r\ge 0$. 
We will use this several times in this section.
We phrase all the statements in this section for  
gross-substitute valuations of the bidders,
although they all hold as long as all minimal envy-free prices that respect 
the reserve prices are equal to the minimum \newwe{} prices.

The proof proceeds as follows. We first characterize the demand of an 
$\m$-mediator for bidders with  gross-substitute valuations. As a first
step we show that for such bidders an $\m$-mediator actually computes a
\newwe{} with minimum prices in each of her virtual auctions. The minimality
of the prices implies that whenever
the virtual auction prices for an item set $S$ are globally envy-free, they are 
also minimum \newwe{} prices for the set of all items $\Omega$ and the bidders in $\B_i$. Thus, given reserve 
prices~$r$, all virtual auctions of an $\m$-mediator result in the same 
price vector $p$ as long as they are run on a set $S$ with non-negative 
revenue. With the help of some technical lemmata, we then completely 
characterize the demand of an $\m$-mediator and show that the mediator
does not have to run \emph{multiple} virtual auctions to determine her demand;
it suffices to run \emph{one} virtual auction on $\Omega$ where the set
of allocated items is a set in the demand of the $\m$-mediator. 
Thus for gross-substitute bidders the mediator can efficiently 
answer demand queries and compute the outcome of her local auction.

Finally we compare the utility function of the $\OR$-player to the optimal
value of the \newlp{} to observe that they have to be equal (up to an 
additive constant) for item sets that are in the demand of the $\OR$-player. 
Combined with the above characterization 
of the demand of the mediator, we can then relate both demands at central 
auction prices $r$ to optimal solutions of the \newlp{} for $r$ and $\Omega$ 
and hence show the equality of the demands for these two 
mediator definitions for  gross-substitute valuations of the bidders.
Recall that an $\OR$-player over gross-substitute valuations has a 
gross-substitute valuation~\cite{LehmannLeNi2006}. Thus in this case we can 
regard the $\m$-mediator as having a gross-substitute 
valuation. This implies that a Walrasian equilibrium for the central 
auction exists and, with the efficient demand oracle defined above, can be 
computed efficiently when all bidders have gross-substitute valuations and all 
mediators are $\m$-mediators.

We start the proof of Theorem~\ref{th:equal} with showing that the mediator 
computes in every virtual auction a \newwe{} with minimum prices.
The proof of this lemma is given in the next subsection.
\begin{lemma}
\label{lem:efminef}
If the valuations of an $\m$-mediator's bidders are  gross
substitute, then the $\m$-mediator computes minimum \newwe{} prices 
in her virtual auctions, i.e., items not allocated in a virtual auction
have a price equal to their reserve price.
\end{lemma}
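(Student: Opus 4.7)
The plan is to show that the minimal envy-free price vector $p^S \ge r$ output by the virtual auction must coincide with the unique minimum \newwe{} price vector $p^*$ for $\B_i$ on $S$ at reserves $r$; the lemma follows immediately, since $p^*$ by definition prices unallocated items at their reserve. By Theorem~\ref{th:reserve}(b), since the valuations of $\B_i$ are gross substitute, a \newwe{} on $S$ exists and the set of \newwe{} price vectors forms a complete lattice, so $p^*$ is well defined; moreover $p^*$ is itself envy-free and $\ge r$.

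I would establish $p^S = p^*$ by contradiction, assuming there exists an item $j \in S$ with $p^S_j > r_j$ that is not in $\cup_b S_b$, and constructing a strictly smaller envy-free vector $\ge r$ to contradict minimality. My candidate for the perturbation is the vector $q$ that agrees with $p^S$ on every item but $j$, with $q_j \in [r_j, p^S_j)$. If no bidder in $\B_i$ has a demand representative at $p^S$ containing $j$, then for a sufficiently small gap $p^S_j - q_j$ each bidder's demand at $q$ is unchanged (valuations live on a finite ground set, so demands are locally constant in the prices), and hence $\langle S_b\rangle$ remains envy-free at $q$. This gives a strictly smaller envy-free price vector $\ge r$, contradicting the minimality of $p^S$.

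The genuinely hard case is when some bidder in $\B_i$ already has a demand representative at $p^S$ that contains $j$, so that lowering $p^S_j$ immediately breaks the envy-freeness of $\langle S_b\rangle$ for that bidder. One must then reallocate items among $\B_i$ to recover an envy-free allocation at $q$, and this is where the gross-substitute structure is crucial. A clean way to do it is via the additive-player reduction of Definition~\ref{def:additive} combined with Theorem~\ref{th:reserve}(a1): augmenting $\B_i$ with an additive player $a$ having $v_a = r$ preserves gross substitutes, and the induced \we{} prices for $\B_i \cup \{a\}$ form a complete lattice whose minimum corresponds under Theorem~\ref{th:reserve}(a1) to $p^*$. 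Using the gs demand correspondence on $\B_i \cup \{a\}$ together with this lattice structure, one obtains an envy-free allocation at a price vector that is coordinate-wise $\le p^S$ and strictly below $p^S$ in coordinate $j$, yielding the desired contradiction. The main obstacle is precisely this reallocation step in the tied case: without gross substitutes, lowering $p^S_j$ could force a bidder to abandon $S_b$ in favor of some $T \ni j$ with no valid way to redistribute items among $\B_i$ that restores envy-freeness for everyone.
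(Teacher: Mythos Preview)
Your high-level plan is sound: once you know that $p^S$ is a \newwe{} price vector, the lattice gives $p^*\le p^S$, and minimality of $p^S$ among envy-free vectors $\ge r$ then forces $p^S=p^*$. Note, though, that your contradiction hypothesis (``some unallocated $j$ has $p^S_j>r_j$'') is the negation of ``$p^S$ is a \newwe{} price'', not of ``$p^S=p^*$''; you should make the extra lattice-plus-minimality step explicit.

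The substantive gap is your Case~2. You correctly identify it as the hard case, but you do not actually prove anything there: ``using the gs demand correspondence together with this lattice structure one obtains an envy-free allocation at a price vector $\le p^S$ and strictly below in coordinate $j$'' is an assertion, not an argument. The only natural candidate for that smaller envy-free vector is $p^*$ itself, but you have not shown $p^*\le p^S$---that is precisely the content of the lemma. The additive-player reduction does not help here either: $(\langle S_b\rangle,\,\Omega'_a=S\setminus\cup_b S_b,\,p^S)$ is \emph{not} a \we{} for $\B_i\cup\{a\}$, because $a$ is charged $p^S_j>r_j$ for the offending item $j$, so you cannot place $p^S$ inside the \we{} lattice and compare it to the lattice minimum.

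The paper's proof takes a different route that avoids this circularity. It proves the stronger statement (Lemma~\ref{lem:minef}) that $p^*\le p$ for \emph{every} envy-free $p\ge r$, not only minimal ones. Assuming some coordinate of $p$ lies strictly below $p^*$, it uses LP duality to show that the set $J=\{j:p_j<p^*_j\}$ is over-demanded at $p^{*-\delta J}$ (via a requirement-function lower bound), and then invokes the monotonicity of requirement functions under the single-improvement property (Gul--Stacchetti) to transfer over-demandedness of $J$ to the prices $p$, contradicting envy-freeness of $p$. That monotonicity step is exactly the gross-substitute content that your Case~2 needs but does not supply.
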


This lemma implies that whenever for a set of items $S$ a virtual auction computes 
globally envy-free prices~$p^S$, these prices have
to be equal to the minimum \newwe{} prices on $\Omega$.
\begin{corollary}\label{cor:priceeq}
If the valuation functions of all bidders $b \in \B_i$ are  gross 
substitute,
then for reserve prices $r\ge 0$ and all sets $S \subseteq \Omega$ such that 
$R_{i,r}(S) \ne -1$ the virtual auction prices $p^S$ are equal to 
$p^{\Omega}$ for all items in $\Omega$.
\end{corollary}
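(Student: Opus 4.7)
My plan is to prove $p^S = p^\Omega$ on all of $\Omega$ by playing the two ``minimum \newwe{}'' characterizations (on $S$ and on $\Omega$) against each other. First I will invoke Lemma~\ref{lem:efminef} twice: it guarantees that $p^S$ is the pointwise minimum \newwe{} price vector for the auction on $S$ (after extension to $\Omega$ via $p^S_j=r_j$ for $j\in\Omega\setminus S$, which is consistent because unallocated items in a \newwe{} carry their reserve price), and that $p^\Omega$ is the minimum \newwe{} on $\Omega$. The hypothesis $R_{i,r}(S)\neq -1$ then says $\langle S_b\rangle$ is globally envy-free at $p^S$; combined with $p^S\ge r$ and the fact that every item outside $\cup_b S_b$ (either because it lies in $\Omega\setminus S$ or in $S\setminus\cup_b S_b$) carries its reserve price, this promotes $(\langle S_b\rangle, p^S)$ to a \newwe{} on all of $\Omega$.

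From here the easy direction will follow immediately: because $p^\Omega$ is the pointwise minimum \newwe{} on $\Omega$, $p^\Omega \le p^S$ on $\Omega$, and for $j\in\Omega\setminus S$ the sandwich $r_j\le p^\Omega_j\le p^S_j=r_j$ already forces equality, so only coordinates in $S$ still require work.

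The hard direction, $p^S_j\le p^\Omega_j$ for $j\in S$, I would handle by the following swap. The preliminaries record the Gul--Stacchetti fact that for gross-substitute valuations any \we{} allocation can be paired with any \we{} price vector to form a \we{}. Via the correspondence of Theorem~\ref{th:reserve}(a1) between \newwe{} and \we{} (with the additive reserve-price bidder), this fact transfers directly to \newwe{} on $\Omega$. Applied to the \newwe{} allocation $\langle S_b\rangle$ just established and the \newwe{} price vector $p^\Omega$, it yields that $(\langle S_b\rangle, p^\Omega)$ is itself a \newwe{} on $\Omega$. In particular every item in $\Omega\setminus\cup_b S_b$ must now satisfy $p^\Omega_j=r_j$, so restricting to $S$ gives $(\langle S_b\rangle, p^\Omega|_S)$ as a valid \newwe{} on $S$ (global envy-freeness trivially restricts to local envy-freeness). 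Minimality of $p^S$ on $S$ then forces $p^S|_S\le p^\Omega|_S$, and combining the two directions closes the proof.

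The main obstacle I foresee is precisely this swap: I need the ``free recombination'' property of \we{} to carry over to \newwe{}, and I need to verify that pairing $\langle S_b\rangle$ with $p^\Omega$ leaves no item in $S\setminus\cup_b S_b$ priced strictly above its reserve. The first point is handled cleanly by Theorem~\ref{th:reserve}(a1), which converts the recombined pair into a \we{} for the augmented bidder set with the additive reserve-price bidder. The second point is an automatic consequence of the \newwe{} condition on the recombined pair, and is exactly what permits the restriction to $S$ to remain a valid \newwe{} there.
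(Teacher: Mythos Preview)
Your proof is correct. The paper does not spell out a proof of Corollary~\ref{cor:priceeq}; it simply asserts that Lemma~\ref{lem:efminef} ``implies'' the statement, apparently relying on the remark after Definition~\ref{def:mediator} that a globally envy-free $p^S$ is automatically \emph{minimal} envy-free $\ge r$ on~$\Omega$, so that it must coincide with the \emph{minimum} $p^\Omega$ furnished by Lemma~\ref{lem:minef}. Your argument bypasses that remark and instead obtains the reverse inequality by recombining the \newwe{} allocation $\langle S_b\rangle$ with the price vector $p^\Omega$ (via Theorem~\ref{th:reserve}(a1) and the Gul--Stacchetti swap, i.e., Corollary~\ref{cor:combine}) and then restricting to~$S$. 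This is precisely the mechanism the paper itself uses in the proof of Lemma~\ref{lem:lessitems}(b), so your route is entirely within the paper's toolkit; the advantage is that it makes the corollary self-contained rather than leaning on the unproved remark.
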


It follows that an item $j$ with $p^{\Omega}_j > r_j$ must be in all sets with
$R_{i,r}(S) \ne -1$ and thus in all demand representatives of the mediator. 
This implies that two sets $S$ and $S'$ with $R_{i,r}(S) \ne -1$
and $R_{i,r}(S') \ne -1$ can only differ in items $j$ with $p^{\Omega}_j = r_j$.
Thus if for both $S$ and $S'$ all items are allocated in the virtual auction,
then $R_{i,r}(S) = p^S(S) - r(S) = p^{S'}(S') - r(S') = R_{i,r}(S')$.
Furthermore if for a set $S''$ with $R_{i,r}(S'') \ne -1$ an item $j \in S''$
with $r_j > 0$ is not allocated in the virtual auction, then 
$R_{i,r}(S'') < R_{i,r}(S)$.
Hence, if for a set $S$ with $R_{i,r}(S) \ne -1$ all items
are allocated in the virtual auction, then $R_{i,r}(S) = \max_{S'} R_{i,r}(S')$ 
and thus $S$ is in the demand of the mediator. Note that by
Definition~\ref{def:mediator}, if there are multiple \newwe{} allocations 
on $S$ for the prices $p^S$, the mediator chooses the one that maximizes 
$\sum_{b \in \B_i} p^S(S_b)$, i.e., if the mediator can allocate all items in $S$,
she will.
\begin{corollary}\label{cor:SinD}
Assume that the valuation functions of all bidders $b \in \B_i$ are  gross 
substitute. Let $r \ge 0$ be some reserve prices.
If for some set $S$ with $R_{i,r}(S) \ne -1$ all items with strictly positive
reserve price can be allocated in the virtual auction of the mediator,
then $S$ is in $D_i(r)$.
\end{corollary}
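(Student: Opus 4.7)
The plan is to show that the hypothesis of the corollary is essentially the same situation the paragraph immediately before the corollary already handles, once the mediator's tie-breaking rule is invoked. The preceding paragraph shows that if the virtual auction for $S$ happens to allocate every item of $S$, then $S \in D_i(r)$; the corollary weakens ``every item is allocated'' to ``every item with $r_j>0$ can be allocated,'' and the remaining work is to bridge this gap and then conclude.

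First I would use Definition~\ref{def:mediator}'s tie-breaking rule: among envy-free allocations on $S$ at the minimum \newwe{} prices $p^S$, the mediator chooses one maximizing $\sum_{b \in \B_i} p^S(S_b)$. By hypothesis there \emph{exists} a \newwe{} allocation on $S$ that allocates every $j\in S$ with $r_j>0$, and each such item contributes $p^S_j \ge r_j > 0$ to the sum, so the mediator's chosen allocation must also allocate every such item. The items of $S$ that remain unallocated therefore satisfy $r_j = 0$, and by Lemma~\ref{lem:efminef} also $p^S_j = r_j = 0$. This lets me rewrite the revenue as
\[
R_{i,r}(S) \;=\; \sum_{b\in\B_i} p^S(S_b) - r(S) \;=\; p^S(S) - r(S),
\]
which is exactly the form used in the preceding paragraph.

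Next I would compare $S$ to an arbitrary $S'$ with $R_{i,r}(S') \ne -1$. Corollary~\ref{cor:priceeq} gives $p^S = p^{S'} = p^{\Omega}$ on all of $\Omega$, so every item $j$ with $p^{\Omega}_j > r_j$ belongs to both $S$ and $S'$, and the symmetric difference consists only of items $j$ with $p^{\Omega}_j = r_j$. Since $R_{i,r}(S') \le p^{\Omega}(S') - r(S')$, while $R_{i,r}(S) = p^{\Omega}(S) - r(S)$ by the previous step, the items on which $S$ and $S'$ differ contribute zero to the gap $p^{\Omega}(\cdot) - r(\cdot)$ on each side, and items in $S \cap S'$ contribute identically. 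Hence $R_{i,r}(S) \ge R_{i,r}(S')$, establishing $S \in D_i(r)$.

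The only subtlety I expect — and the step I would be most careful with — is the passage from the qualitative assumption ``can be allocated'' to the statement that the mediator's \emph{actual} virtual-auction allocation allocates all positive-reserve-price items. Everything else is bookkeeping once Lemma~\ref{lem:efminef} and Corollary~\ref{cor:priceeq} are in hand, because together they force all relevant price vectors to collapse to $p^{\Omega}$, and the items on which different demand candidates disagree are exactly those that contribute zero revenue.
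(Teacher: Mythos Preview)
Your proposal is correct and follows essentially the same approach as the paper: the corollary is meant to be read off the paragraph immediately preceding it, which already combines Corollary~\ref{cor:priceeq}, Lemma~\ref{lem:efminef}, and the tie-breaking rule exactly as you do. Your write-up is in fact a bit more explicit than the paper's about the case where items with $r_j=0$ remain unallocated.

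One small tightening for the tie-breaking step you flagged: the sentence ``each such item contributes $p^S_j>0$ to the sum, so the mediator's chosen allocation must also allocate every such item'' is not quite self-evident, because a priori the maximizing allocation could drop a positive-price item while picking up others. The clean way to close this is to note that \emph{every} \newwe{} allocation at $p^S$ has $p^S_j=r_j$ on its unallocated items, so $\sum_b p^S(S_b)=p^S(S)-\sum_{j\ \text{unalloc.}}r_j$; maximizing the left side is therefore the same as minimizing the total reserve price of unallocated items, which is zero precisely when all items with $r_j>0$ are allocated. With that observation in place, your argument is complete.
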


To completely characterize demand and allocation of the $\m$-mediator,
we first show a useful technical result.
We compare the minimum \newwe{} prices for a set 
$T \subseteq \Omega$ with the minimum \newwe{} prices for a subset 
$S \subseteq T$. For this we will use the following well-known 
result for Walrasian equilibria by Gul and Stacchetti~\cite{GS1999} that by 
Theorem~\ref{th:reserve}~(a1) also holds with reserve prices.
\begin{lemma}[\cite{GS1999}]\label{lem:combine}
Any Walrasian price vector combined with any Walrasian allocation yields a
Walrasian equilibrium.
\end{lemma}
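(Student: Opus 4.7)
The plan is to leverage the LP-duality framework introduced earlier in the paper. Recall that (as noted in the discussion preceding \lp{}) a Walrasian equilibrium exists iff \lp{} has an integral optimum, and in that case the Walrasian allocations are precisely the integral optimal primal solutions while the Walrasian price vectors are precisely the $p$-coordinates of optimal dual solutions (taking $u_b = \max_S (v_b(S) - p(S))$). Given this correspondence, I would fix an arbitrary Walrasian price vector $p$ (so that $(u,p)$ with the above $u$ is dual-optimal) and an arbitrary Walrasian allocation $\langle \Omega_b \rangle$ (an integral primal-optimum), and then show directly that the pair $(p, \langle \Omega_b \rangle)$ satisfies the two conditions defining a Walrasian equilibrium: envy-freeness of every bidder and zero prices on unallocated items.

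The heart of the argument is a tight chain of inequalities. For each bidder, dual feasibility gives $v_b(\Omega_b) \le u_b + p(\Omega_b)$. Summing over $b$ and using $\sum_b p(\Omega_b) = \sum_{j \in \cup_b \Omega_b} p_j \le \sum_{j \in \Omega} p_j$ (valid because $p \ge 0$ and the sets $\Omega_b$ are disjoint) yields
\[
\sum_b v_b(\Omega_b) \;\le\; \sum_b u_b + \sum_{j \in \Omega} p_j.
\]
The left-hand side is the primal objective at the optimal allocation $\langle \Omega_b \rangle$, and the right-hand side is the dual objective at $(u,p)$, so by strong LP duality they are equal. Hence both intermediate inequalities must be tight: per-bidder tightness means $v_b(\Omega_b) - p(\Omega_b) = u_b = \max_S(v_b(S) - p(S))$, i.e., $\Omega_b \in D_b(p)$ (envy-freeness); and tightness of the summed inequality forces $p_j = 0$ for every item $j \in \Omega \setminus \cup_b \Omega_b$. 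These are exactly the two requirements of a Walrasian equilibrium.

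The main subtlety, and the one place where some care is needed, is invoking the LP correspondence in a way that applies to a price vector and an allocation that did not originate from the \emph{same} witnessing equilibrium; what the argument is really exploiting is that complementary slackness holds between \emph{any} primal-optimal and \emph{any} dual-optimal LP solution, not only between a matched pair. Once the correspondence between Walrasian equilibria and LP optima is in hand, no additional property of the valuations (monotonicity, gross substitutes, etc.) enters the proof.
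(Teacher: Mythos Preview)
The paper does not prove this lemma itself; it is quoted from Gul and Stacchetti~\cite{GS1999} without proof. Your LP-duality argument is correct and is essentially the standard proof of this fact. It also meshes naturally with the paper's own exposition, since the correspondence you rely on---Walrasian allocations as integral optima of \lp{} and Walrasian prices as the $p$-part of dual optima---is precisely what the paper recalls (with attribution to~\cite{BikhchandaniMa1997,MWG1995}) when introducing \lp{}. Your closing remark, that the crux is complementary slackness holding between \emph{any} primal optimum and \emph{any} dual optimum rather than a matched pair, pinpoints exactly why the ``mix-and-match'' conclusion goes through.
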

\begin{corollary}[of Lemma~\ref{lem:combine} and Theorem~\ref{th:reserve}]
\label{cor:combine}
For $r \ge 0$ a \newwe{} price vector combined with any \newwe{} allocation 
yields a \newwe{}.
\end{corollary}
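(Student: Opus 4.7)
The plan is to lift the statement to the augmented bidder set $\B' = \B \cup \{a\}$ from Definition~\ref{def:additive} via Theorem~\ref{th:reserve}(a1), apply Lemma~\ref{lem:combine} in the standard Walrasian setting for $\B'$, and then pull the result back down to a \newwe{} for $\B$.

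Concretely, I would let $(\langle \Omega_b\rangle, p)$ be the \newwe{} providing the price vector and let $(\langle \Omega'_b\rangle, q)$ be the \newwe{} providing the allocation that I wish to recombine with~$p$. By Theorem~\ref{th:reserve}(a1) these lift to Walrasian equilibria $(\langle \Omega_{b'}\rangle_{\B'}, p)$ and $(\langle \Omega'_{b'}\rangle_{\B'}, q)$ for $\B'$, in which the additive player~$a$ receives exactly the items left unallocated to bidders in~$\B$ under the corresponding \newwe{}, and the prices on the $\B$-allocated items are unchanged.

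Next I would apply Lemma~\ref{lem:combine} in the $\B'$-setting: the \we{} price vector~$p$ combined with the \we{} allocation $\langle \Omega'_{b'}\rangle_{\B'}$ yields a \we{} $(\langle \Omega'_{b'}\rangle_{\B'}, p)$ for $\B'$. Finally, I would invoke the reverse direction of Theorem~\ref{th:reserve}(a1) on this combined equilibrium to descend back to a \newwe{} for~$\B$ with allocation $\langle \Omega'_b\rangle$.

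The main (small) obstacle is checking that the round trip returns the same price vector~$p$ rather than a modified one: the reverse direction of Theorem~\ref{th:reserve}(a1) overwrites prices on items allocated to~$a$ with~$r$, so I must verify that $p$ already agrees with $r$ on those items. This follows because in $(\langle \Omega'_{b'}\rangle_{\B'}, p)$ envy-freeness of the additive bidder forces $p_j \le v'_a(\{j\}) = r_j$ on every item it receives, while $p \ge r$ is inherited from the original \newwe{}; hence $p_j = r_j$ precisely on the items unallocated under $\langle \Omega'_b\rangle$, which is exactly the structure a \newwe{} price vector must have. Thus $(\langle \Omega'_b\rangle, p)$ is a \newwe{} for~$\B$, as required.
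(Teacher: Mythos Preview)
Your proposal is correct and is exactly the argument the paper intends: the paper does not spell out a proof but simply states that Lemma~\ref{lem:combine} ``by Theorem~\ref{th:reserve}~(a1) also holds with reserve prices,'' and your lift--combine--descend via the additive player~$a$ is precisely that reduction. Your check that the round trip returns the same price vector $p$ (using $p \ge r$ from the \newwe{} together with envy-freeness of~$a$ forcing $p_{j'} \le r_{j'}$ on $\Omega'_a$) fills in the one detail the paper leaves implicit.
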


The following lemma shows that, for suitable sets $S$ and $T$ with $S \subseteq T$,
the minimum prices in a \newwe{} on $S$ are equal for items in $S$ to the
corresponding prices in $T$. 
Part~(a) of the lemma was shown for monotone gross-substitute 
valuations without reserve prices in~\cite{GS1999}.
\begin{lemma}\label{lem:lessitems}
Assume the valuation functions of all bidders $b \in \B_i$ are  gross 
substitute.
Let $T \subseteq \Omega$ be a set of items and let $S$ be a subset of $T$.
For fixed reserve prices $r \ge 0$, let 
$(\langle T_b \rangle, p^T)$ be a \newwe{} with minimum prices on $T$ and let $(\langle S_b \rangle, p^S)$ be a \newwe{} with minimum prices on $S$.
Then \textup{(}a\textup{)} $p^T_j \le p^S_j$ for all $j \in S$ and
\textup{(}b\textup{)} if $\cup_b T_b \subseteq S$, then $p^T_j = p^S_j$ for all 
$j \in S$ and $(\langle T_b \rangle, p^S)$ is a \newwe{} with minimum prices 
on $S$.
\end{lemma}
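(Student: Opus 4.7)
The plan is to handle part (a) first since part (b) reduces to a minimality argument once (a) is available. For (a) the natural route is to reduce to the reserve-free setting and invoke the Gul–Stacchetti monotonicity result. For (b) I would leverage (a) together with a direct verification that the allocation from $T$ and the restricted price vector already constitute a \newwe{} on $S$, and then conclude by minimality.

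For part (a), I would introduce the additional additive bidder $a$ from Definition~\ref{def:additive}, obtaining $\B'_i := \B_i \cup \{a\}$ whose valuations are gross substitute (additive valuations are \gs{} and this is what Theorem~\ref{th:reserve}(b1) exploits). By Theorem~\ref{th:reserve}(a1), a \newwe{} on $T$ (resp.\ $S$) for $\B_i$ with reserves $r$ corresponds to a \we{} on $T$ (resp.\ $S$) for $\B'_i$, with the \newwe{} prices agreeing with the \we{} prices on items allocated to $\B_i$ and equal to $r_j$ on items absorbed by $a$. Minimum \newwe{} prices on $T$ therefore arise from minimum \we{} prices $q^T$ for $\B'_i$ on $T$ via $p^T_j = \max(q^T_j,r_j)$, and analogously $p^S_j = \max(q^S_j,r_j)$ on $S$. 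Now I would apply the reserve-free statement of~\cite{GS1999} (which is precisely part (a) without reserves) to the \gs{} bidder set $\B'_i$ to get $q^T_j \le q^S_j$ for $j \in S$. Taking the coordinate-wise maximum with $r$ preserves the inequality and yields $p^T_j \le p^S_j$ for $j \in S$.

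For part (b), assume $\bigcup_b T_b \subseteq S$. I would first verify that $(\langle T_b \rangle, p^T|_S)$ is itself a \newwe{} on $S$. The prices satisfy $p^T|_S \ge r$ since $p^T \ge r$. For any $j \in S \setminus \bigcup_b T_b$ we have $j \in T \setminus \bigcup_b T_b$ because $S \subseteq T$, so $p^T_j = r_j$ by the \newwe{} condition on $T$. Envy-freeness of each bidder $b$ is inherited from the envy-freeness on $T$: the set $T_b$ is in the demand on $T$ at prices $p^T$, and since $T_b \subseteq S \subseteq T$, it also maximizes $v_b(X) - p^T(X)$ over $X \subseteq S$. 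Hence $p^T|_S$ is a \newwe{} price vector on $S$, and by the minimality of $p^S$ we obtain $p^S_j \le p^T_j$ for $j \in S$. Combined with (a), this gives $p^T_j = p^S_j$ on $S$, and invoking Corollary~\ref{cor:combine} shows that $(\langle T_b \rangle, p^S)$ is a \newwe{} with minimum prices on $S$.

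The main obstacle I would expect is in part (a): the correspondence between \newwe{}s and \we{}s in Theorem~\ref{th:reserve}(a1) modifies prices on items allocated to the additive bidder, so one must check carefully that minimality of $q^T,q^S$ on the reserve-free side transfers to minimality of $p^T,p^S$ on the reserve side (which is why I phrase $p^\bullet$ as $\max(q^\bullet,r)$). One might be tempted to prove (a) directly from the complete-lattice structure of Theorem~\ref{th:reserve}(b2), but because $p^T$ and $p^S$ live on different item sets the lattice operations do not apply directly, and the additive-bidder reduction is what makes the monotonicity available.
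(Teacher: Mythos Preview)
Your proposal is correct, and part~(b) is essentially identical to the paper's argument (verify $(\langle T_b\rangle,p^T|_S)$ is a \newwe{} on $S$, use minimality of $p^S$ for the reverse inequality, combine with~(a), finish with Corollary~\ref{cor:combine}).

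For part~(a) you take a different route. The paper does \emph{not} reduce to the reserve-free setting; instead it extends $p^S$ to all of $T$ by setting $p'_j=\max(V,r_j)$ on $T\setminus S$ (with $V=\max_{b,S'}v_b(S')$), observes that $(\langle S_b\rangle,p')$ is envy free on $T$ with $p'\ge r$, and then invokes Lemma~\ref{lem:minef} (minimum \newwe{} prices coincide with minimum envy-free prices $\ge r$) to conclude $p^T\le p'$ and hence $p^T_j\le p^S_j$ on $S$. Your approach trades the forward reference to Lemma~\ref{lem:minef} for the external monotonicity result of~\cite{GS1999} plus the minimality-transfer identity $p^\bullet=\max(q^\bullet,r)$. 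You flag this identity as the crux but do not actually verify it; it does hold, because every \newwe{} price vector is already a \we{} price vector for the augmented bidder set (Theorem~\ref{th:reserve}(a1), direction $\Rightarrow$, as also used in the proof of~(b2)), so the minimum \newwe{} price dominates the minimum \we{} price $q^\bullet$, while $\max(q^\bullet,r)$ is a \newwe{} price vector (direction $\Leftarrow$), giving the reverse inequality. The paper's argument is shorter and more self-contained; yours makes the reserve-to-no-reserve reduction explicit and reusable, at the cost of that extra verification.
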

\begin{proof}
(a) Let $V$ be the maximal valuation of any bidder, i.e., 
$\max_{b, T' \subseteq \Omega} v_b(T')$. Let $p'_j = p^S_j$ for $j \in S$ 
and let $p'_j = \max(V, r_j)$ for $j \in T \setminus S$. 
Then $(\langle S_b \rangle, p')$ 
is envy-free for all bidders on $T$ and $p' \ge r$. 
By Lemma~\ref{lem:minef} the prices $p^T$ are the minimum envy-free 
prices $ \ge r$ on $T$. Thus $p^T_j \le p'_j$ 
for all $j \in T$ and hence $p^T_j \le p^S_j$ for all $j \in S$.

(b) If the set $S$ contains all items in $\cup_b T_b$, then the prices $p^T$
restricted to the set $S$ with the allocation $\langle T_b \rangle$ are
a \newwe{} on $S$. Thus by the minimality of the prices $p^S$, we have
$p^T_j \ge p^S_j$ for all $j \in S$. Combined with (a) this shows $p^T_j = p^S_j$ 
for all $j \in S$.

The allocation $\langle T_b \rangle$ with prices $p^T$ restricted to $S$
are a \newwe{} on $S$ and the prices $p^T$ restricted to the set $S$ are 
equal to the minimum \newwe{} prices $p^S$ on $S$. 
Hence by Corollary~\ref{cor:combine} $(\langle T_b \rangle, p^S)$ 
is a \newwe{} with minimum prices on $S$.
\end{proof}

To characterize the demand of the mediator, we further need that the maximum
revenue the mediator can obtain is non-negative for all reserve prices $r \ge 0$.
To compare the demand $D_i(r)$ of an $\m$-mediator to the demand of an 
$\OR$-mediator, we further use that for \emph{every} set in $D_i(r)$ all items with 
positive reserve price are allocated in the local auction.
\begin{lemma}\label{lem:revenue}\label{lem:partition}
Assume the valuation functions of all bidders $b \in \B_i$ are gross 
substitute and let $r \ge 0$ be any reserve price vector. 
\upbr{a} There exists a \upbr{potentially empty} set $S\subseteq \Omega$ such that 
the revenue $R_{i,r}(S)$ of an $\m$-mediator $\M_i$ is non-negative.
\upbr{b} For a set $T \in D_i(r)$ with virtual auction allocation
$\langle T_b \rangle$ all items $j \in T$ with $r_j > 0$ are allocated.
\end{lemma}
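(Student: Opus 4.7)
My plan is to prove both parts by leveraging the minimum-price \newwe{} structure from Section~\ref{sec:we} (Theorem~\ref{th:reserve}\upbr{b} and Lemma~\ref{lem:lessitems}\upbr{b}) together with Lemma~\ref{lem:efminef}, which says the mediator's virtual auction outputs precisely the minimum \newwe{} prices. The core observation used in both parts is that once we identify the ``right'' set~$S$, the mediator's price-extension rule $p^S_j = r_j$ on $\Omega \setminus S$ is consistent with the minimum \newwe{} prices on $\Omega$, so the extended virtual-auction prices coincide with a globally envy-free price vector.

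For part~\upbr{a}, I would exhibit a concrete witness. Start from a minimum-price \newwe{} on $\Omega$ with allocation $\langle \Omega^*_b \rangle$ and prices $p^\Omega$ (exists by Theorem~\ref{th:reserve}\upbr{b1}) and take $S := \cup_b \Omega^*_b$. In a \newwe{} unallocated items carry their reserve price, so $p^\Omega = r$ on $\Omega \setminus S$, matching the mediator's extension rule. Lemma~\ref{lem:lessitems}\upbr{b} with $T = \Omega$ then yields $p^S = p^\Omega$ on $S$ and shows that $\langle \Omega^*_b \rangle$ itself is a \newwe{} allocation on $S$ at $p^S$. Since $p^\Omega$ is globally envy free for $\langle \Omega^*_b \rangle$, this allocation is a valid output of the virtual auction, and
\[
R_{i,r}(S) = \sum_b p^S(\Omega^*_b) - r(S) = \sum_b \bigl(p^\Omega(\Omega^*_b) - r(\Omega^*_b)\bigr) \ge 0
\]
follows from $p^\Omega \ge r$ and the disjointness of the $\Omega^*_b$.

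For part~\upbr{b}, I would argue by contradiction. Let $T \in D_i(r)$ with virtual auction allocation $\langle T^*_b \rangle$, and suppose some $j^* \in T$ with $r_{j^*} > 0$ is unallocated. Put $T' := T \setminus \{j^*\}$. Lemma~\ref{lem:efminef} gives $p^T_{j^*} = r_{j^*}$, so a price-by-price comparison on the three regions $T'$, $\{j^*\}$ and $\Omega \setminus T$ (using Lemma~\ref{lem:lessitems}\upbr{b} on the first) shows that the mediator's extended price vector for $T'$ agrees with her extended price vector for $T$ on all of $\Omega$. The same lemma also makes $\langle T^*_b \rangle$ a \newwe{} allocation on $T'$ at $p^{T'}$, and it remains globally envy free there because $R_{i,r}(T) \ne -1$ established global envy freeness at the extended $p^T$. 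Hence $\langle T^*_b \rangle$ is a legitimate output of the virtual auction on $T'$, yielding
\[
R_{i,r}(T') \ge \sum_b p^{T'}(T^*_b) - r(T') = R_{i,r}(T) + r_{j^*} > R_{i,r}(T),
\]
contradicting $T \in D_i(r)$.

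The main subtlety I expect to have to justify is the tie-breaking of the virtual auction: a priori, the rule ``maximize $\sum_b p^S(S_b)$'' could pick an envy-free allocation on~$S$ that fails global envy freeness and collapses the revenue to $-1$, even when a globally envy-free allocation with the same sum exists. I would resolve this consistently with the revenue-maximizing spirit of Definition~\ref{def:mediator}, in which $-1$ models infeasibility, by observing that the mediator prefers a globally envy-free allocation among those maximizing $\sum_b p^S(S_b)$; the witnesses $\langle \Omega^*_b \rangle$ in~\upbr{a} and $\langle T^*_b \rangle$ in~\upbr{b} are exactly such allocations, so the inequalities above are valid and both parts follow.
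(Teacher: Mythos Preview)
Your proof is correct and follows essentially the same route as the paper: in both parts you pass to the set of allocated items of a minimum-price \newwe{} and use Lemma~\ref{lem:efminef} together with Lemma~\ref{lem:lessitems}\upbr{b} to transport the allocation and prices. The only real difference is in part~\upbr{b}: the paper sets $T' = \cup_b T_b$, removing \emph{all} unallocated items at once, whereas you remove a single item $j^*$. The paper's choice makes the witness allocation $\langle T_b \rangle$ allocate every item of $T'$, so it trivially maximizes $\sum_b p^{T'}(S_b)$ and the tie-breaking issue you flag never arises; with your $T' = T \setminus \{j^*\}$, other unallocated items of $T$ may remain in $T'$, so your claim that $\langle T^*_b \rangle$ is ``exactly such an allocation'' (i.e., a maximizer of $\sum_b p^{T'}(S_b)$) needs the extra observation that every locally envy-free allocation on $T'$ at $p^{T'}=p^\Omega$ is automatically globally envy free (since each bidder already attains his $\Omega$-optimal utility on $T'$) and hence lifts to a \newwe{} allocation on $T$, where $\langle T^*_b \rangle$ was chosen to be maximal. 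This is easy to supply, but taking $T' = \cup_b T_b$ as the paper does sidesteps the whole discussion.
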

\begin{proof}
(a) Let $(\langle \Omega_b \rangle, p)$ be the outcome of the virtual auction of an 
$\m$-mediator for $\Omega$. Take $S = \cup_b \Omega_b$. By Lemma~\ref{lem:efminef} and 
Lemma~\ref{lem:lessitems}~(b) $(\langle \Omega_b \rangle, p)$ is not only envy-free 
on $\Omega$ but further is a \newwe{} with minimum prices for 
the virtual auction of an $\m$-mediator for the set $S$.
Thus the mediator can allocate all items in $S$ in her virtual auction for $S$. 
Thus by $p \ge r$ the revenue $R_{i,r}(S) = \sum_b p(\Omega_b) - r(S) = p(S) - r(S)$ 
of the mediator for the set $S$ is non-negative.

(b) By (a) we have $R_{i,r}(T) \ge 0$ and thus $R_{i,r}(T) = \sum_b p^T(T_b) - r(T)$. Consider the set $T' = \cup_b T_b$. Assume by contradiction some items with 
$r_j > 0$ are not allocated in $\langle T_b \rangle$.
By Lemma~\ref{lem:efminef} $p^T_j = r_j$ for all items $j \in \Omega \setminus T'$. For the virtual auction prices $p^{T'}$ for $T'$ we have by definition
$p^{T'}_j = r_j$ for $j \in \Omega \setminus T'$.
By Lemma~\ref{lem:lessitems}~(b) $p^{T'}_j = p^T_j$ for all $j \in T'$ and thus
$p^{T'}_j = p^T_j$ for all $j \in \Omega$. Thus $(\langle T_b \rangle, p^{T'})$ 
is envy-free on the whole set of items $\Omega$, i.e., $R_{i,r}(T') \ne -1$. 
The mediator can allocate all items in $T'$; hence, $R_{i,r}(T') = p(T') - r(T') 
> R_{i,r}(T) = p(T') - r(T)$, a contradiction to $T \in D_i(r)$.
\end{proof}
This proof gives us immediately an efficient way to determine a set $S$
with $R_{i,r}(S) \ge 0$: Run the virtual auction on $\Omega$ with reserve prices $r$
and return the set $S$ of allocated items. Combined with 
Corollary~\ref{cor:priceeq}, this procedure
actually yields not only a set with non-negative revenue but even a set in the 
demand of the mediator.

Before we continue, we observe a relation
between the utility of the $\OR$-player for reserve prices $r$ and the $\OR$
over modified valuation functions\footnote{The valuations $\set{\widetilde{v}}$
might be non-monotone even if the valuations $\set{{v}}$ are monotone. This is not relevant here.} $\set{\widetilde{v}}$ with $\widetilde{v}_b(S) = 
v_b(S) - r(S)$ for all $S \subseteq \Omega$. Note that $\widetilde{v}_{\OR}(S) + 
r(S)$ equals the optimal value of the \newlp{} on $S$
as long as an optimal integral solution exists.
This relation gives a characterization of the demand of the $\OR$-player 
with reserve prices.
\begin{observation}\label{obs:or}
The utility of the $\OR$-player at prices $r$ is given by
$$u_{\OR,r}(S) = \max_{\langle S_b\rangle} \left( 
\sum_{b \in \B_i} v_b(S_b) \right) - r(S) = \max_{\langle S_b\rangle} \left( 
\sum_{b \in \B_i} v_b(S_b) - r(S_b) \right) - r(S \setminus \cup_b S_b)\,$$
The $\OR$ of the valuation functions $\widetilde{v}_b(S) = v_b(S) - r(S)$ 
is given by
$$\widetilde{v}_{\OR}(S) = \max_{\langle S_b \rangle} \sum_{b \in \B_i}
\widetilde{v}_b(S_b) =  \max_{\langle S_b\rangle} \left( \sum_{b \in \B_i} 
v_b(S_b) - r(S_b)\right)\,$$
By definition we have $\widetilde{v}_{\OR}(S) \ge u_{\OR,r}(S)$ 
\textup{(}1\textup{)}.\\
Let the allocation $\langle S^*_b\rangle$ be $\argmax_{\langle S_b \rangle} 
\sum_{b \in \B_i} \widetilde{v}_b(S_b)$ for the set $S$ and let 
$S^* = \cup_b S^*_b \subseteq S$. Then $\widetilde{v}_{\OR}(S) = \widetilde{v}_{\OR}
(S^*) = u_{\OR,r}(S^*)$ \textup{(}2a\textup{)}. Thus $\widetilde{v}_{\OR}(S) > 
u_{\OR,r}(S)$ iff $u_{\OR,r} (S^*) > u_{\OR,r}(S)$ iff $S \notin D_{\OR}(r)$ \textup{(}2b\textup{)}.
\end{observation}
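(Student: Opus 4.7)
The plan is to unwind the definitions and keep careful track of where the reserve-price term $r(S)$ ends up when one rewrites $u_{\OR,r}$ as a maximum over allocations. Both $u_{\OR,r}(S)$ and $\widetilde{v}_{\OR}(S)$ are suprema over allocations $\langle S_b\rangle$ with $\cup_b S_b \subseteq S$, and the difference between them will turn out to be exactly the price charged for the uncovered portion $S\setminus\cup_b S_b$.

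For the first displayed equation, expand $u_{\OR,r}(S) = v_{\OR}(S) - r(S) = \max_{\langle S_b\rangle}\sum_b v_b(S_b) - r(S)$ directly from the definitions. For the second form, fix any allocation $\langle S_b\rangle$ with $\cup_b S_b \subseteq S$; since the $S_b$ are pairwise disjoint subsets of $S$, we have $r(S) = \sum_b r(S_b) + r(S\setminus\cup_b S_b)$. Distributing $\sum_b r(S_b)$ into the inner sum gives $\sum_b(v_b(S_b)-r(S_b)) - r(S\setminus\cup_b S_b)$, and taking the max over allocations yields the claimed expression. The identity for $\widetilde{v}_{\OR}(S)$ is a direct substitution of $\widetilde{v}_b(T) = v_b(T) - r(T)$ into the definition of the $\OR$-valuation. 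Inequality (1) follows because, for every allocation, $r(S\setminus\cup_b S_b) \ge 0$, so the second form of $u_{\OR,r}(S)$ is bounded above by $\widetilde{v}_{\OR}(S)$.

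For (2a), I would prove the two equalities in sequence. The equality $\widetilde{v}_{\OR}(S) = \widetilde{v}_{\OR}(S^*)$ is an ``argmax coincides with its support'' argument: any allocation admissible for $\widetilde{v}_{\OR}(S^*)$ is also admissible for $\widetilde{v}_{\OR}(S)$, giving $\widetilde{v}_{\OR}(S) \ge \widetilde{v}_{\OR}(S^*)$, while the maximizer $\langle S^*_b\rangle$ satisfies $\cup_b S^*_b = S^*$ by definition and is therefore admissible for $\widetilde{v}_{\OR}(S^*)$, witnessing the reverse inequality. For $\widetilde{v}_{\OR}(S^*) = u_{\OR,r}(S^*)$, apply the second form of the first equation with $S$ replaced by $S^*$, using $\langle S^*_b\rangle$ as the candidate maximizer: the residual term $r(S^*\setminus\cup_b S^*_b)$ vanishes, so $u_{\OR,r}(S^*) \ge \widetilde{v}_{\OR}(S^*)$, and combined with (1) applied to $S^*$ this yields equality.

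Finally, (2b) chains (2a) with the definition of the demand. The first ``iff'' is an immediate substitution of $\widetilde{v}_{\OR}(S) = u_{\OR,r}(S^*)$ from (2a). For the second ``iff'', the forward direction is trivial: $S^* \subseteq \Omega$ witnesses that $S$ is not utility-maximal at prices $r$, so $S \notin D_{\OR}(r)$. For the reverse direction, combining (1) with (2a) gives the always-valid inequality $u_{\OR,r}(S^*) \ge u_{\OR,r}(S)$; if $S \in D_{\OR}(r)$, then by maximality of $S$ also $u_{\OR,r}(S) \ge u_{\OR,r}(S^*)$, so the inequality is an equality, and contrapositively $u_{\OR,r}(S^*) > u_{\OR,r}(S)$ forces $S \notin D_{\OR}(r)$. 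There is no real obstacle: once one recognizes that the difference between $u_{\OR,r}$ and $\widetilde{v}_{\OR}$ is precisely the charge for the uncovered portion of $S$, the entire observation reduces to bookkeeping.
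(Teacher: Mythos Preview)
The paper presents this as an observation without a separate proof, so your careful unpacking of the definitions is the right approach, and your arguments for the displayed identities, inequality~(1), claim~(2a), and the first biconditional in~(2b) are correct.

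Your treatment of the second biconditional in~(2b), however, has a gap. You correctly prove the forward implication $u_{\OR,r}(S^*) > u_{\OR,r}(S) \Rightarrow S \notin D_{\OR}(r)$, but what you label the ``reverse direction'' is in fact the forward direction again by contrapositive: you argue that $S \in D_{\OR}(r)$ forces $u_{\OR,r}(S^*) = u_{\OR,r}(S)$, and the contrapositive of that is precisely the forward implication you already established. The genuine converse would require showing that $u_{\OR,r}(S^*) = u_{\OR,r}(S)$ (equivalently $\widetilde{v}_{\OR}(S)=u_{\OR,r}(S)$) implies $S \in D_{\OR}(r)$, and you have not done this.

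In fact that converse is false as stated, so the gap cannot be filled. With a single bidder, items $\{1,2\}$, valuation $v(\{1\})=1$, $v(\{2\})=v(\{1,2\})=10$, and $r=0$, take $S=\{1\}$. The optimal allocation inside $S$ is $S^*_b=\{1\}$, hence $S^*=\{1\}=S$ and $u_{\OR,r}(S^*)=u_{\OR,r}(S)=1$; yet $S\notin D_{\OR}(r)$ since $u_{\OR,r}(\{2\})=10$. The difficulty is that $S^*$ is defined relative to $S$, so it cannot witness a better set lying outside~$S$. The paper's subsequent use of~(2b), in the proof of Lemma~\ref{lem:ORtoM}, invokes only the implication $S\in D_{\OR}(r)\Rightarrow \widetilde{v}_{\OR}(S)=u_{\OR,r}(S)$, i.e., the contrapositive of the forward direction, so nothing downstream is affected; but the second ``iff'' in the observation, as written, overstates what actually holds.
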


The following two lemmata finally show that the demand of an $\m$-mediator
is equal to the demand of an $\OR$-player for any central auctions prices $r \ge 0$
and  gross-substitute valuations of the bidders. The proofs combine the
results obtained so far to relate both demands to an optimal solution of
the \newlp{} for reserve prices $r$ and the items in $\Omega$.
\begin{lemma}\label{lem:MtoOR}
If the valuation functions of all bidders $b \in \B_i$ are  gross 
substitute, then for any reserve prices 
$r \ge 0$ every set~$S$ in the demand of an $\m$-mediator is in the demand of the $\OR$-player.
Additionally, the $\OR$-player could use the $\m$-mediator's
allocation of the items in $S$ to the bidders in $\B_i$ to maximize her utility.
\end{lemma}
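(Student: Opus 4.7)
The plan is to show that for every $S\in D_i(r)$, the virtual-auction allocation $\langle S_b\rangle$ computed on $S$ extends to a \newwe{} on $\Omega$ when paired with the prices $p^\Omega$ coming from the virtual auction on $\Omega$. Once this is in hand, Theorem~\ref{th:reserve}(a2) makes $\langle S_b\rangle$ an optimal integer solution of the \newlp{} on $\Omega$, Equation~\eqref{eq:add} identifies this optimum with $\widetilde{v}_{\OR}(\Omega)+r(\Omega)$, and a short comparison shows that $S$ maximizes $u_{\OR,r}$ with witnessing allocation $\langle S_b\rangle$.

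I would start by collecting what the virtual auction on $S$ gives. Since $S\in D_i(r)$ and by Lemma~\ref{lem:revenue}(a) the maximum revenue is non-negative, $R_{i,r}(S)\ne -1$, so the allocation $\langle S_b\rangle$ together with the extended prices $p^S$ is envy-free on all of $\Omega$. Lemma~\ref{lem:efminef} states that $p^S$ are minimum \newwe{} prices on $S$, Corollary~\ref{cor:priceeq} upgrades this to $p^S=p^\Omega$ on $\Omega$, and Lemma~\ref{lem:revenue}(b) yields $r_j=0$ on $S\setminus\cup_b S_b$, so in particular $r(S)=r(\cup_b S_b)$.

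The main step is to verify that $p^\Omega_j=r_j$ for every $j\in\Omega\setminus\cup_b S_b$, which combined with the previous step makes $(\langle S_b\rangle,p^\Omega)$ a \newwe{} on all of $\Omega$. Let $\langle U_b\rangle$ be the allocation of the virtual auction on $\Omega$ and $U=\cup_b U_b$. By Corollary~\ref{cor:SinD} we have $U\in D_i(r)$, so $R_{i,r}(S)=R_{i,r}(U)$. The right-hand side rewrites as $\sum_{j\in\Omega}(p^\Omega_j-r_j)$, using Lemma~\ref{lem:efminef} to drop the zero contributions outside $U$; the left-hand side rewrites as $\sum_{j\in\cup_b S_b}(p^\Omega_j-r_j)$, using $p^S=p^\Omega$ and $r(S)=r(\cup_b S_b)$. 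Equality of these two sums of non-negative terms, with the index set on the left a subset of the one on the right, forces $p^\Omega_j-r_j=0$ on the complement, i.e., on $\Omega\setminus\cup_b S_b$, as required.

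It remains to cash out the \newwe{}. By Theorem~\ref{th:reserve}(a2), $\langle S_b\rangle$ is an optimal integer solution of the \newlp{} on $\Omega$, and Equation~\eqref{eq:add} gives $\sum_b v_b(S_b)-r(\cup_b S_b)=\widetilde{v}_{\OR}(\Omega)$. Combined with $r(S\setminus\cup_b S_b)=0$, this delivers $u_{\OR,r}(S)\ge \sum_b v_b(S_b)-r(S)=\widetilde{v}_{\OR}(\Omega)$. The matching upper bound $u_{\OR,r}(T)\le\widetilde{v}_{\OR}(\Omega)$ for every $T\subseteq\Omega$ follows by picking an optimal allocation $\langle T_b\rangle$ for $v_{\OR}(T)$ and using $r(T)\ge r(\cup_b T_b)$. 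Hence $S\in D_{\OR}(r)$ and an allocation realizing this maximum is exactly $\langle S_b\rangle$. The main obstacle in the whole argument is the revenue-equality step; without lifting $p^S$ to a \newwe{} on all of $\Omega$ one cannot invoke Theorem~\ref{th:reserve}(a2), and the connection to the \newlp{} and $\widetilde{v}_{\OR}(\Omega)$ breaks.
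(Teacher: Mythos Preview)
Your proof is correct and follows the same overall route as the paper: show that $(\langle S_b\rangle, p^S)$ is a \newwe{} on all of $\Omega$, invoke Theorem~\ref{th:reserve}(a2) to identify $\langle S_b\rangle$ with an optimal integral solution of the \newlp{} on $\Omega$, and then pass to $u_{\OR,r}$ via $\widetilde{v}_{\OR}$ and Observation~\ref{obs:or}.

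The one difference worth flagging is that what you call ``the main step'' --- the revenue-equality argument comparing $S$ with $U=\cup_b\Omega_b$ to force $p^\Omega_j=r_j$ on $\Omega\setminus\cup_b S_b$ --- is an unnecessary detour. Lemma~\ref{lem:efminef}, which you already cite, says explicitly that items not allocated in a virtual auction have price equal to their reserve price; hence $p^S_j=r_j$ for $j\in S\setminus\cup_b S_b$, and combined with the extension $p^S_j=r_j$ for $j\in\Omega\setminus S$ this gives $p^S_j=r_j$ on all of $\Omega\setminus\cup_b S_b$ directly. Since $(\langle S_b\rangle,p^S)$ is already envy-free on $\Omega$ (because $R_{i,r}(S)\ne -1$) and $p^S\ge r$, this is a \newwe{} on $\Omega$ without any further work. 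The paper's proof takes exactly this shortcut, so the obstacle you highlight is not actually there.
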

\begin{proof}
Let $S$ be a set in the demand $D_i(r)$ of an $\m$-mediator $\M_i$ for some reserve 
prices $r \ge 0$. By Lemma~\ref{lem:revenue}~(a) there exists a set $S'$ with
$R_{i,r}(S') \ge 0$, thus for $S$ in the demand we have $R_{i,r}(S) \ge 
R_{i,r}(S') \ge 0$. Let $(\langle S_b \rangle, p)$ be the outcome of the virtual 
auction of $\M_i$ for the set $S$. By Lemma~\ref{lem:efminef} 
$(\langle S_b \rangle, p)$ is a \newwe{} with minimum prices on 
the item set $S$. Since $R_{i,r}(S) \ge 0$, the
allocation $\langle S_b \rangle$ is envy-free on $\Omega$. As
the prices $p$ are \newwe{} prices for $S$ and are extended with $p_j = r_j$ for 
$j \in \Omega \setminus S$, the allocation $\langle S_b \rangle$ and prices $p$
are also a \newwe{} for the item set $\Omega$. Hence by Theorem~\ref{th:reserve}~(a2) 
the allocation $\langle S_b \rangle$ is an integral solution to the \newlp{} and 
thus maximizes the objective value of the \newlp{} for both the item sets $S$ 
and $\Omega$. Since the value of the \newlp{} only depends on the allocated sets,
the two objective values are the same. Note that by the definition of 
$\widetilde{v}_{\OR}$ in Observation~\ref{obs:or} the objective value of the 
\newlp{} is given by $\widetilde{v}_{\OR}(\Omega) + r(\Omega)$ with 
$\widetilde{v}_{\OR}(\Omega) = \max_{S' \in \Omega}\widetilde{v}_{\OR}(S')$. 
Thus we have $\widetilde{v}_{\OR}(S) = \widetilde{v}_{\OR}(\Omega) = \max_{S' \in \Omega}\widetilde{v}_{\OR}(S')$.
By Lemma~\ref{lem:partition}~(b) we can assume that all items $j \in S$ with 
$r_j > 0$ are allocated in $\langle S_b \rangle$, which implies that
$\sum_{b \in \B_i} r(S_b) = r(S)$ and thus $\widetilde{v}_{\OR}(S) = u_{\OR,r}(S)$. 
Since we have $\widetilde{v}_{\OR}(S') \ge u_{\OR,r}(S')$ 
for all $S' \subseteq \Omega$ by Observation~\ref{obs:or}~(1), this implies 
$u_{\OR,r}(S) \ge \max_{S' \in \Omega} u_{\OR,r}(S')$. Since $S \subseteq \Omega$,
it also holds that $u_{\OR,r}(S) \le \max_{S' \in \Omega} u_{\OR,r}(S')$, implying
that $u_{\OR,r}(S) = \max_{S' \in \Omega} u_{\OR,r}(S')$. Thus, $S$ is in the 
demand $D_{\OR}(r)$ of the $\OR$-player for reserve prices $r$ and the 
$\OR$-player could use the allocation $\langle S_b \rangle$ to maximize her utility.
\end{proof}

\begin{lemma}\label{lem:ORtoM}
If the valuation functions of all bidders $b \in \B_i$ are gross 
substitute, then for any reserve prices $r \ge 0$
every set~$S$ in the demand of the $\OR$-player is in the demand of an $\m$-mediator.
Additionally, the $\m$-mediator could use the
$\OR$-player's allocation of the items in $S$ to the bidders in $\B_i$ to maximize 
his revenue.
\end{lemma}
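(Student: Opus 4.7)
My plan is to reverse the chain of equivalences used in Lemma~\ref{lem:MtoOR}: starting from $S \in D_{\OR}(r)$, I build a $\widetilde{v}_{\OR}$-optimal \emph{covering} allocation of $S$, recognize it as a \newwe{} allocation on $S$ via Theorem~\ref{th:reserve}(a2), extend it to a \newwe{} on $\Omega$, and finally evaluate the mediator's revenue through LP duality to show it is maximal. Fix $S \in D_{\OR}(r)$. Applying Observation~\ref{obs:or}(2a) to $\Omega$ yields some $\Omega^* \subseteq \Omega$ with $u_{\OR,r}(\Omega^*) = \widetilde{v}_{\OR}(\Omega)$; maximality of $u_{\OR,r}(S)$ on $D_{\OR}(r)$ forces $u_{\OR,r}(S) = u_{\OR,r}(\Omega^*) = \widetilde{v}_{\OR}(\Omega)$, and Observation~\ref{obs:or}(2b) applied to $S$ then gives $\widetilde{v}_{\OR}(S) = u_{\OR,r}(S) = \widetilde{v}_{\OR}(\Omega)$. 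The identity $\widetilde{v}_{\OR}(S) = u_{\OR,r}(S)$ forces the maximum defining $\widetilde{v}_{\OR}(S)$ to be attained by a covering allocation $\langle S_b\rangle$ with $\cup_b S_b = S$ (take any allocation attaining $u_{\OR,r}(S)$ and complete it to a covering one via monotonicity of $v_b$).

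Next I would run the mediator's virtual auction on $S$. By Lemma~\ref{lem:efminef} and Theorem~\ref{th:reserve}(b1) it returns min \newwe{} prices $p^S$ on $S$ and, by Theorem~\ref{th:reserve}(a2), an optimal integral solution to the \newlp{} on $S$. Rewriting the \newlp{} objective as $\sum_b v_b(S_b) - r(\cup_b S_b) + r(S)$ identifies \newlp{}-optimal integral allocations with $\widetilde{v}_{\OR}$-optimal allocations, so the covering $\langle S_b\rangle$ constructed above is a \newwe{} allocation at $p^S$ (combined with $p^S$ via Corollary~\ref{cor:combine}); moreover the tie-breaking in Definition~\ref{def:mediator} picks such a covering allocation, because $\sum_b p^S(S_b) = p^S(\cup_b S_b)$ is maximized when $r(\cup_b S_b) = r(S)$. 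Extend $\langle S_b\rangle$ and $p^S$ to $\Omega$ by leaving $\Omega \setminus S$ unallocated and setting $p^S_j = r_j$ on $\Omega\setminus S$. The extended \newlp{} objective on $\Omega$ becomes $\sum_b v_b(S_b) + r(\Omega\setminus S) = \widetilde{v}_{\OR}(S) + r(\Omega) = \widetilde{v}_{\OR}(\Omega) + r(\Omega)$, matching the \newlp{} optimum on $\Omega$. Theorem~\ref{th:reserve}(a2) then makes the extension a \newwe{} allocation on $\Omega$, and Lemma~\ref{lem:lessitems}(b) pins its extended prices to the min \newwe{} prices $p^\Omega$ on $\Omega$; in particular the extended $p^S$ is envy-free for $\B_i$ on all of $\Omega$, so $R_{i,r}(S) \ne -1$.

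It remains to show that $R_{i,r}(S)$ is already maximal. A direct computation gives $R_{i,r}(S) = p^S(\cup_b S_b) - r(S) = p^S(S) - r(S) = p^\Omega(\Omega) - r(\Omega)$, an expression depending only on $\Omega$, $r$, and the valuations of $\B_i$. Lemma~\ref{lem:MtoOR} places every $T \in D_i(r)$ inside $D_{\OR}(r)$, so the same calculation yields $R_{i,r}(T) = p^\Omega(\Omega) - r(\Omega) = R_{i,r}(S)$, which proves $S \in D_i(r)$; moreover the mediator's allocation on $S$ is the covering $\widetilde{v}_{\OR}$-optimal allocation already produced by the $\OR$-player, and conversely the $\OR$-player can adopt the mediator's allocation. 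The hardest step is bridging $S$ to $\Omega$: extending $p^S$ by reserve prices must yield not just \emph{some} envy-free outcome but exactly the min \newwe{} prices on $\Omega$, which is where $\widetilde{v}_{\OR}(S) = \widetilde{v}_{\OR}(\Omega)$ combined with Lemma~\ref{lem:lessitems}(b) is essential; a secondary subtlety is ensuring that the mediator's tie-breaking picks a covering allocation so the revenue formula collapses cleanly, which is exactly what Observation~\ref{obs:or}(2b) underwrites.
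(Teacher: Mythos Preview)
Your proof is correct and follows essentially the same skeleton as the paper's: both establish $u_{\OR,r}(S)=\widetilde v_{\OR}(S)=\widetilde v_{\OR}(\Omega)$ via Observation~\ref{obs:or}, recognize the $\OR$-player's allocation as an integral optimum of the \newlp{} on $S$ and on $\Omega$ via Theorem~\ref{th:reserve}(a2), and use Lemma~\ref{lem:lessitems}(b) together with Lemma~\ref{lem:efminef} to identify the virtual-auction prices on $S$ with the min \newwe{} prices on $\Omega$, so that $R_{i,r}(S)\ne -1$.

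The one genuine difference is the closing step. The paper, having shown $R_{i,r}(S)\ne -1$ and that all items with $r_j>0$ can be allocated, simply invokes Corollary~\ref{cor:SinD} to conclude $S\in D_i(r)$. You instead compute $R_{i,r}(S)=p^\Omega(\Omega)-r(\Omega)$ explicitly and then appeal to the already-proved reverse inclusion (Lemma~\ref{lem:MtoOR}) to argue that any $T\in D_i(r)$ yields the same revenue, forcing $S\in D_i(r)$. Your route is a valid alternative and has the minor bonus of exhibiting the mediator's optimal revenue as a closed-form expression in $p^\Omega$; the paper's route is shorter and avoids the (harmless but slightly awkward) back-reference to Lemma~\ref{lem:MtoOR}. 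Your additional care in producing a \emph{covering} allocation (via monotonicity) and checking that the mediator's tie-breaking actually selects one is more explicit than the paper, which only needs ``all items with $r_j>0$ are allocated'' for Corollary~\ref{cor:SinD}.
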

\begin{proof}
Let $S$ be a set in the demand $D_{\OR}(r)$ of the $\OR$-player for some reserve 
prices $r \ge 0$. Let $\langle S_b \rangle$ be the allocation of the $\OR$-player for the set $S$. By Observation~\ref{obs:or}~(2b) we have $u_{\OR,r}(S)
= \widetilde{v}_{\OR}(S)$. Recall that $\widetilde{v}_{\OR}(S) + r(S)$ is
equal to the objective value of the \newlp{} for the set $S$. 
Furthermore $\widetilde{v}_{\OR}(S) = \widetilde{v}_{\OR}(\Omega)$ because otherwise by 
Observation~\ref{obs:or}~(2a) there would be some allocation 
$\langle \Omega_b \rangle$ with $S' = \cup_b \Omega_b$ s.t.\ $\widetilde{v}_{\OR}(\Omega)
= \widetilde{v}_{\OR}(S') = u_{\OR,r}(S')$ and thus the utility of the
$\OR$-player for the set $S'$ would be higher than for the set $S$, contradicting
$S \in D_{\OR}(r)$. Hence the allocation $\langle S_b \rangle$ of the 
$\OR$-player is an integral solution to the \newlp{} 
on $S$ as well as on~$\Omega$. 
Let $p$ be the minimum \newwe{} prices~$p$
such that $(\langle S_b \rangle, p)$ is a \newwe{} on~$\Omega$. 
By Lemma~\ref{lem:lessitems}~(b) we know that $(\langle S_b \rangle, p)$, 
with the prices~$p$ restricted to~$S$, is also a \newwe{} with
minimum prices on~$S$. 
By Lemma~\ref{lem:efminef} the 
virtual auction of an $\m$-mediator $\M_i$ for the set~$S$ computes the same
unique minimum prices~$p$. 
Further $\widetilde{v}_{\OR}(S) = u_{\OR,r}(S)$ implies that all items with
strictly positive reserve price are allocated in $\langle S_b \rangle$.
Thus $\M_i$ could allocate all items in~$S$ with strictly positive reserve price
by using the allocation $\langle S_b \rangle$. The allocation $\langle S_b \rangle$
is envy-free at prices $p$ on $S$. Thus the revenue of the 
mediator for the set~$S$ is not set to~$-1$. Hence by Corollary~\ref{cor:SinD} 
the set~$S$ is in the demand of the $\m$-mediator.
\end{proof}

\subsubsection{Proof of Lemma~\ref{lem:efminef}} \label{sec:proofminef}
Lemma~\ref{lem:efminef} is a corollary to the following, more general, lemma.
\begin{lemma}\label{lem:minef}
Consider all envy-free outcomes with prices $p \ge r$ for a set of 
valuations $\{v\}$ and reserve prices $r \ge 0$.
If the valuations $\{v\}$ are gross substitute,
then the minimum \newwe{} prices are minimum envy-free prices $p$ with 
$p \ge r$ among all envy-free outcomes with $p \ge r$.
\end{lemma}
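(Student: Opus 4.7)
The plan is to reduce to the classical (reserve-free) Walrasian setting via the additive bidder from Definition~\ref{def:additive} and Theorem~\ref{th:reserve}, and then invoke a lattice-theoretic fact for gross-substitute valuations.

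By Theorem~\ref{th:reserve}~(b2), the minimum \newwe{} prices $p^{\star}$ exist, and since every \newwe{} is envy-free with $p \ge r$, it suffices to show that every envy-free outcome $(\langle \Omega_b \rangle, p)$ for $\B$ with $p \ge r$ satisfies $p \ge p^{\star}$. Fix such an outcome and extend it to the bidder set $\B' = \B \cup \{a\}$ of Definition~\ref{def:additive} by allocating $\emptyset$ to $a$. Since $v_a(S) = r(S) \le p(S)$ for every $S \subseteq \Omega$, we have $\emptyset \in D_a(p)$, so the extension is envy-free for $\B'$ at~$p$. By Theorem~\ref{th:reserve}~(b1), $\B'$ is gross substitute, and I would now invoke the classical Gul--Stacchetti fact that in a gross-substitute market every envy-free outcome dominates componentwise some Walrasian outcome; this yields a \we{} for $\B'$ with prices $q^{\star} \le p$ and some allocation $\langle \Omega^{\star}_{b'} \rangle$.

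The remaining step converts this \we{} for $\B'$ into a \newwe{} for $\B$ with prices still bounded by $p$. Set $\widetilde{p} := \max(q^{\star}, r)$ and retain the restricted allocation $\langle \Omega^{\star}_b \rangle_{b \in \B}$; clearly $\widetilde{p} \le p$, since both $q^{\star} \le p$ and $r \le p$. To verify that $(\langle \Omega^{\star}_b \rangle, \widetilde{p})$ is a \newwe{} for $\B$, observe that envy-freeness of $a$ at $q^{\star}$ forces every item $j$ with $q^{\star}_j < r_j$ into $\Omega^{\star}_a$ and hence out of every $\Omega^{\star}_b$ with $b \in \B$. Therefore the prices of items in $\cup_{b \in \B} \Omega^{\star}_b$ are unchanged by $\max(\cdot, r)$, and a direct utility comparison (using $\widetilde{p} \ge q^{\star}$ and $\widetilde{p}_j = q^{\star}_j$ for $j \in \Omega^{\star}_b$) shows that $\Omega^{\star}_b$ remains in $D_b(\widetilde{p})$ for every $b \in \B$. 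Finally, any item not allocated to a bidder in $\B$ either lies in $\Omega^{\star}_a$, where $q^{\star}_j \le r_j$ and thus $\widetilde{p}_j = r_j$, or is entirely unallocated in $\langle \Omega^{\star}_{b'} \rangle$, where $q^{\star}_j = 0 \le r_j$ and again $\widetilde{p}_j = r_j$. This gives $p^{\star} \le \widetilde{p} \le p$.

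The main obstacle is the middle step, namely the classical fact that in a gross-substitute market every envy-free outcome dominates some Walrasian outcome. The paper states only the lattice structure of \we{} and \newwe{} prices, not this stronger dominance property, so it has to be derived. This can be done via a standard descending-auction-style argument built on top of the lattice structure: starting from the envy-free outcome, decrease the price of any unallocated positively-priced item and, whenever this item enters some bidder's demand, use the gross-substitute property to perform a ``ripple'' reallocation; the process terminates at a Walrasian equilibrium whose prices lie componentwise below the starting envy-free prices.
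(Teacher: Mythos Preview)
Your reduction via the additive player is clean and correct: the extension of the envy-free outcome to $\B'$ and the conversion back through $\widetilde{p} = \max(q^\star, r)$ both work, the latter being essentially Theorem~\ref{th:reserve}~(a1). This is a genuinely different decomposition from the paper's direct argument.

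The gap, however, is exactly where you flag it. The ``classical Gul--Stacchetti fact'' you invoke---that every envy-free price vector in a gross-substitute market dominates some Walrasian price vector---is precisely the $r=0$ instance of the lemma you are trying to prove, so you have reduced the statement to a special case of itself without establishing that special case. It is not among the results the paper cites (the lattice structure of \we{} prices is strictly weaker), and your descending-auction sketch does not prove it: you do not specify the ripple reallocation, argue termination, or explain why lowering one item's price and reallocating cannot force some \emph{other} item's price up to restore envy-freeness, which would destroy the $\le p$ invariant you need. The paper's proof supplies exactly this missing argument, directly for arbitrary $r \ge 0$: it introduces the requirement function $f_{b,q}(S)$, uses LP duality to show that if $p^\star \not\le p$ then the set $J = \{j : p_j < p^\star_j\}$ is over-demanded at a perturbed price $p^{\star - \delta J}$, and then transports over-demandedness from $p^{\star - \delta J}$ to $p$ itself via Gul--Stacchetti's monotonicity theorem (Theorem~2 in~\cite{GulSt2000}), contradicting envy-freeness of $p$. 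Since this over-demanded-set argument already handles all $r \ge 0$ uniformly, the detour through the additive player---while correct---buys nothing: the substantive work lies entirely in the step you left as a sketch.
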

\begin{proof}
By Theorem~\ref{th:reserve}~(b2) all \newwe{} price vectors form a lattice. Let 
$p^*$ be the minimum price in this lattice. Recall that every \newwe{} price vector
is also an envy-free price vector. Assume by contradiction there exists 
an envy-free price $p$ such that $p^* \not\leq p$. Let $J = \{j \mid p_j < p^*_j\}$,
let $\delta = \min_{j\in J}{\{p^*_j - p_j\}}$ be the \emph{min-gap} and
$p^{*-\delta J} = p^* - \delta_J$ where $\delta_J$ is
the vector with value $\delta$ to each item in $J$ and $0$ otherwise. Note that by 
assumption $J\neq \emptyset,\ \delta>0$ and by minimality of $p^*$ no \newwe{} allocation 
exists for $p^{*-\delta J}$.

Let $w_b(q)$ denote the maximum utility of a bidder $b$ for a price
vector $q$, i.e., $w_b(q) = u_{b,q}(D)$ for some $D\in D_b(q)$.
Following Gul and Stacchetti~\cite{GulSt2000} and Ben-Zwi~et~al.~\cite{Ben-ZwiLaNe2013}, we define a requirement function 
and use Ben-Zwi~et~al.~\cite{Ben-ZwiLaNe2013}'s extension of (one direction of) Hall's Theorem.

\begin{definition}[requirement function]
Define for a set $S$, a bidder $b$, and prices $q$ the requirement function 
$f_{b,q}(S) = \min_{D \in D_b(q)} \{\lvert D \cap S \rvert\}$.
\end{definition}
\begin{observation}[compare Lemma~$2.10$ in~\cite{Ben-ZwiLaNe2013}]\label{obs:req}
For a set $S$, a bidder $b$, and prices $q$, 
we have $f_{b,q}(S) \ge (w_{b}(q) - w_{b}(q+\delta_S)) / \delta$.
\end{observation}
\begin{proof}
Let $D' = \argmin_{D \in D_b(q)} \{\lvert D \cap S \rvert\}$.
Then $ w_{b}(q+\delta_S) \ge u_{b,q+\delta_S}(D') = u_{b,q}(D') - \delta f_{b,q}(S)
= w_{b}(q) - \delta f_{b,q}(S)$,
that is, $\delta f_{b,q}(S) \ge w_{b}(q) - w_{b}(q+\delta_S)$.
\end{proof}

\begin{observation} [Observation $3.2$ in~\cite{Ben-ZwiLaNe2013}]\label{obs:overdemanded}
If for a price vector~$q$ there exists~$S$ such that $\sum_b{f_{b,q}(S)} > |S|$, 
then $q$ is not envy free. In this case we call $S$ \emph{over-demanded} at 
prices~$q$.
\end{observation}
\begin{proof}
In any envy-free allocation of $S$, bidder $b$ must receive a set from his demand, 
thus $b$ must receive at least $f_{b,q}(S)$ many items of $S$.
As each item of $S$ is allocated to at most one bidder, it follows that at least $\sum_b f_{b,q}(S) > |S|$ many items of $S$ are
allocated in any envy-free allocation. Contradiction.
\end{proof}

Note that the utilities $w_b(q)$ and prices $q$ are a feasible solution to the
dual of the \newlp{} for any prices with $q \ge r$. Further note that any
optimal solution to the dual of the \newlp{} implies that there exists a 
corresponding \newwe{}.
By optimality of $p^*$ and the assumption that no \newwe{} allocation exists
for the prices $p^{*-\delta J}$, the objective value of the dual for $p^{*-\delta J}$ is 
strictly greater than the objective of the dual for $p^*$, i.e., 
$\sum_b{w_b(p^{*-\delta J})} + p^{*-\delta J}(\Omega) > \sum_b{w_b(p^*)} + p^*(\Omega)$.
Now by definition of $p^{*-\delta J}$ we know that $p^{*-\delta J}(\Omega) + \delta|J| = p^*(\Omega)$, 
hence together we have that $\sum_b{w_b(p^{*-\delta J})} - \sum_b{w_b(p^*)} > \delta|J|$.
With $\sum_b{f_{b,p^{*-\delta J}}(J)} \ge \left(\sum_b{w_b(p^{*-\delta J})} - 
\sum_b{w_b(p^*)}\right) / \delta$ by Observation~\ref{obs:req} we have that
$\sum_b{f_{b,p^{*-\delta J}}(J)} > \lvert J \rvert$ and thus the set $J$ 
is over-demanded at $p^{*-\delta J}$ by Observation~\ref{obs:overdemanded}.

Next we use the following theorem by Gul and Stacchetti~\cite{GulSt2000} 
to show that this implies $\sum_b{f_{b,p}(J)} > |J|$, i.e., a contradiction 
to the assumption that the prices $p$ are envy-free. By another result of
Gul and Stacchetti~\cite{GS1999}, monotone valuations that are gross substitute
also satisfy the single improvement property.
\begin{theorem} [Theorem $2$ in~\cite{GulSt2000}]
Let $q^{(1)},q^{(2)}$ be two price vectors such that $q^{(1)}\leq q^{(2)}$ and $S$ a set with $\forall j\in S,\ q^{(1)}(j)=q^{(2)}(j)$.
Then for a bidder $b$ that fulfills the single improvement property the following apply
\begin{enumerate}
	\item $f_{b,q^{(1)}}(S) \leq f_{b,q^{(2)}}(S)$
	\item $f_{b,q^{(1)}}(\Omega\setminus S) \geq f_{b,q^{(2)}}(\Omega\setminus S)$
\end{enumerate}
\end{theorem}

Recall that by the definition of $J$ and $p^{*-\delta J}$,
we have $p_j \le p^{*-\delta J}_j$ for $j \in J$ and $p_j \ge p^{*-\delta J}_j$ for $j \not\in J$.
Now if we take $q^{(1)} = p^{*-\delta J}$ and take $q^{(2)}_j = p^{*-\delta J}_j \text{ if }j\in J$ and 
$q^{(2)}_j = p_j \text{ if }j\notin J$, then $q^{(1)} \le q^{(2)}$ and thus by the first 
part of the theorem we get that $f_{b,q^{(2)}}(J) \ge f_{b,p^{*-\delta J}}(J)> |J|$, i.e., the 
set $J$ is over-demanded at prices $q^{(2)}$. On the other hand, if we take 
$q^{(3)} = p$ and the same $q^{(2)}$, then $q^{(3)} \le q^{(2)}$ and $q^{(3)}_j = q^{(2)}_j$ for 
$j \not\in J$, and hence by the second part of the theorem with 
$S = \Omega \setminus J$ and thus $\Omega \setminus S = J$ we have that 
$f_{b,p}(J) \ge f_{b,q^{(2)}}(J)> |J|$. This shows that the set $J$ is over-demanded 
at $p$ as well and thus there cannot be an envy-free allocation for prices 
$p$ by Observation~\ref{obs:overdemanded}.
\end{proof}

\subsection{Computing an Equilibrium}\label{sec:algdesc}
The basic three-party auction is simple: First run the central
auction at the exchange, then the local auctions at the mediators.
In this section we summarize the details and analyze the time needed to 
compute a three-party competitive equilibrium.
We assume that all bidders have gross-substitute valuations and that
their valuations can be accessed via a demand oracle. We assume, for 
simplicity, that there are $m$ $\m$-mediators, 
each with $n/m$ distinct bidders.
We will use known polynomial-time auctions for the two-party allocation 
problem, see~\cite{PaesLeme2014} for a recent survey.
Theorem~\ref{th:reserve} shows how such an auction can be modified to yield
a \newwe{} instead of a Walrasian equilibrium.

Let~$A$ be a polynomial-time algorithm that can access 
$n$~gross-substitute valuations over subsets of $k$~items~$\Omega$
via a demand oracle
and outputs a Walrasian price vector $p\in \R^k$ and a Walrasian allocation 
$\langle \Omega_i \rangle_{i\in [n]}$.
Let the runtime of $A$ be  $T(n,k) = O(n^{\alpha}k^{\beta})$
for constants $\alpha$, $\beta$.

Although we can assume oracle access to the bidders' valuations,
we cannot assume it for the mediators' (gross-substitute) valuations, as they are not part of the
input. 
However, as outlined in the previous section, a mediator can determine
a set in her demand by running a single virtual auction to compute a \newwe{},
i.e., there is an efficient demand oracle for the mediators. 
Hence, solving the 
allocation problem for the central auction can be done in time
$T(m,k) \cdot T(n/m,k) = O(n^{\alpha}k^{2\beta})$.
Further, the local auctions for all mediators take time $O(m \cdot T(n/m,k))$
and thus the total time to compute a three-party competitive equilibrium is 
$O(n^{\alpha}k^{2\beta})$.
Note that the computation at the exchange takes only $T(m,k)$ time and
that the mediators are assumed to be separated, that is, the computation
at the mediators can be done in parallel.

\subsubsection{Small Number of Items}\label{sec:fewitems}
In the context of ad exchanges it is 
natural to assume that the number of items is very small and independent of the 
number of bidders. In the following we discuss the computation of an equilibrium in 
this case. The results on this section will hold as long as the number
of items $k$ is $o(\log{n})$.

When the number of items is that small, bidders' valuations can be represented 
as complete lists. More than that, given a bidder valuation oracle, it takes 
only $2^k$ queries to compile such a list. In order to find the valuation 
lists of all the mediators as well, we have to solve the allocation
problem of each mediator $2^k$ times, i.e., compute the $\OR$ of the bidder 
valuations for all subsets. 
Given the valuations of the mediators, the central auction is equivalent
to solving the two-party allocation problem for the mediators. 
Let $T'(n, k)$ be the runtime of algorithm $A$ when valuations are accessed
via a valuation oracle. Then the overall running time to determine a 
three-party competitive equilibrium with this approach is 
$\widehat{T}(n,m,k) = m \cdot 2^k \cdot T'(n/m,k) + T'(m,k)$.

We show next how this approach can be extended to an almost linear time
algorithm for such a small number of items by artificially introducing 
mediators of mediators (and recurse).
Assume for simplicity $T'(n,k) = O(n^{\alpha}\cdot f(k))$ where $f(\cdot)$ is at most exponential in $k$ and $\alpha = 1 + \gamma$ for some $\gamma 
> 0$.\footnote{Current methods have $\alpha = 6$ and thus $\gamma = 5$.} By choosing $m = n^{1/2}$ 
we obtain a running time of
\begin{align*}
	\widehat{T}(n,n^{1/2},k)
	&= n^{1/2} \cdot 2^k \cdot 
n^{\alpha/2} \cdot f(k) + n^{\alpha/2} \cdot f(k)\,,\\
&=  \left(2^k n^{1+\gamma/2} + 
n^{1/2 + \gamma / 2}\right) \cdot f(k)\,,\\
&\le c \cdot n^{\alpha/2+1/2}\cdot 2^{2k}\,,
\end{align*}
for some constant $c \ge 0$.
Let us add one level of recursion:
\begin{align*}
	\widehat{T}(n,n^{1/2},k) &= 
	2^k \cdot n^{1/2} \cdot \widehat{T}(n^{1/2}, n^{1/4},k) \\&\phantom{=}+ 
	\widehat{T}(n^{1/2}, n^{1/4},k) \,,\\
	&\le c \cdot 2^{3k} \cdot n^{1/2} \cdot 
	(n^{1/2})^{\alpha/2 + 1/2} \\
	&\phantom{=}+ c \cdot 2^{2k} \cdot (n^{1/2})^{\alpha/2 + 1/2}\,,\\
	&\le c \cdot 2^{3k} \cdot ( n^{1/2 + 1/4 + \gamma/4 + 1/4} \\
	&\phantom{=}+ n^{1/4 + \gamma / 4 + 1 / 4} ) \,,\\
	&\le c \cdot 2^{3k} \cdot n^{(\alpha/2 + 1/2)/2 + 1/2} \,.
\end{align*}
For $t$ levels of mediators we obtain $\widehat{T}(n,n^{1/2},k) \le 
c \cdot 2^{(t+1)k} \cdot n^{\alpha_t/2 + 1/2}$ where $\alpha_0=\alpha$ and 
$\alpha_t = \frac{\alpha_{t-1}}{2}+\frac{1}{2} = \frac{\gamma}{2^t} + 1$.
Since for constant $\delta = 1/(t+1)$ we have that $k = o(\log{n})$ implies $ k = 
o(\log{n^{\delta}})$ and $\alpha$ is constant, we can choose $t$ to 
achieve a runtime of $O(n^{1+\epsilon+o(1)})$ for any fixed $\epsilon > 0$.

An almost linear time algorithm to solve the 
two-party allocation problem when $k = o(\frac{\log{n}}{\log\log{n}})$
can be obtained by reducing the problem to unit-demand valuations in the following
way.
Assume there are $n$ bidders and $k = o(\frac{\log{n}}{\log\log{n}})$ items 
$\Omega$.  The following method computes in almost linear time an allocation 
between bidders and items that maximizes social welfare (i.e., $\sum_b 
v_b(\Omega_b)$), which is 
equal to a Walrasian allocation if it exists. Consider all possible 
partitions of the $k$ items from which there are $O(k^k) = O(2^{k\log k})$ many. For a partition $P$
let the sets in the partition be the new items and let the value of the bidders
for a new item be their value for the set. Define a unit-demand valuation
function for each bidder based on these values. Then solve the allocation problem
for the new items and the unit-demand valuations. The resulting allocation
maximizes social welfare for the given partition. Over all possible partitions the 
one with maximum social welfare yields the desired solution. For unit-demand 
valuations the allocation problem is equivalent to the maximum weight bipartite 
matching problem that can be solved with the Hungarian method in 
time~$O(nk^2)$~\cite{Frank2005}. Thus the total time is $O(nk^{k + 2})$.

%
%

\subsection{Relation to Minimum Walrasian Prices}\label{sec:minWEprices}

The following lemma shows that the prices the bidders have to pay at the computed
three-party equilibrium, and thus the utilities they achieve, 
are the same as in a Walrasian equilibrium between bidders and items with 
minimum prices.

\begin{lemma}\label{lem:minWE}
Let allocation $\langle \Omega_\beta \rangle$ and prices~$q$ be a Walrasian
equilibrium with minimum prices for gross-substitute bidders~$\B$ and items 
$\Omega$. Let each bidder be connected to exactly one of $m$ $\m$-mediators and let 
$\B_i$ denote the set of bidders connected to mediator~$\M_i$.
Let $r, p^1, p^2, \dotsc, p^m$ be the price vectors of a three-party equilibrium
for the mediators and bidders and let
$\langle \Omega_i \rangle$ be the equilibrium allocation of items to mediators and 
$\langle \Omega_b \rangle$ the equilibrium allocation of items to bidders.
Let $r$ and $\langle \Omega_i \rangle$ be a Walrasian equilibrium with minimum
prices for the mediators and let $p_j = \max_i(p^i_j)$ for all items $j$. Then $p = q$.
\end{lemma}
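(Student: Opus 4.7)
The plan is to establish $p = q$ by a two-sided comparison $q \le p$ and $p \le q$, leveraging two tools: by Theorem~\ref{th:equal} every $\m$-mediator is equivalent to an $\OR$-player over her bidders and hence has a gross-substitute valuation, and by Lemma~\ref{lem:combine} any Walrasian price vector combined with any Walrasian allocation yields a Walrasian equilibrium. Both of these let us freely move between bidder-level and mediator-level equilibria at the same prices.

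For the direction $q \le p$, I would first verify that $(\langle \Omega_b \rangle, p)$ is itself a Walrasian equilibrium for the original bidders on~$\Omega$. For each $b \in \B_i$, Definition~\ref{def:mediator} gives $\Omega_b \in D_b(p^i)$ on~$\Omega$, while Requirement~\ref{rq} together with the disjointness of the mediator shares forces $p_j = p^i_j$ for all $j \in \Omega_i$, in particular for $j \in \Omega_b$. Since $p \ge p^i$ with equality on $\Omega_b$, every alternative set has utility at~$p$ at most its utility at~$p^i$, which is at most the utility of $\Omega_b$ at~$p^i$, equal to the utility of $\Omega_b$ at~$p$; hence $\Omega_b \in D_b(p)$. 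Items unallocated to any bidder have $r_j = 0$ (by Lemma~\ref{lem:partition}(b) any $j \in \Omega_i$ with $r_j > 0$ is locally allocated), and Lemma~\ref{lem:efminef} then gives $p^i_j = r_j = 0$, so $p_j = 0$. Minimality of~$q$ among Walrasian price vectors for the bidders then yields $q \le p$, and as a by-product $\langle \Omega_b \rangle$ is a Walrasian allocation, so by Lemma~\ref{lem:combine} the pair $(\langle \Omega_b \rangle, q)$ is Walrasian as well.

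For the direction $p \le q$, I would first show that $q$ is a Walrasian price vector for the mediators. By Theorem~\ref{th:equal} each mediator's demand equals that of her $\OR$-player, and since every $b \in \B_i$ is envy-free at~$q$ the set $\cup_{b \in \B_i} \Omega_b$ maximizes the $\OR$-player's utility; hence $q$ is a Walrasian price at the mediator level. Lemma~\ref{lem:combine} applied there lets us swap in the given Walrasian mediator allocation $\langle \Omega_i \rangle$, and minimality of~$r$ gives $r \le q$. Next I would argue that for each~$i$ the prices $q$ restricted to $\Omega_i$, together with $\langle \Omega_b \rangle_{b \in \B_i}$, form a \newwe{} on $\Omega_i$ with reserve~$r$: envy-freeness on $\Omega_i$ is inherited from envy-freeness on~$\Omega$; $q \ge r$ was just shown; and any $j \in \Omega_i$ not locally allocated is globally unallocated (because mediator shares are disjoint), forcing $q_j = 0 = r_j$. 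Lemma~\ref{lem:efminef} then yields $p^i_j \le q_j$ for $j \in \Omega_i$, and for $j \notin \Omega_i$ we already have $p^i_j = r_j \le q_j$. Taking the item-wise maximum over~$i$ gives $p \le q$.

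The main obstacle will be the $p \le q$ direction, specifically showing that $q$ behaves well simultaneously at three levels: as bidder prices on~$\Omega$, as mediator prices on~$\Omega$, and as local-auction prices within each $\Omega_i$. The crucial leverage is that Theorem~\ref{th:equal} collapses the mediator level to an $\OR$-player problem, after which Lemma~\ref{lem:combine} lets us reuse the known Walrasian allocations at the new price vector~$q$ without having to exhibit a separate matching equilibrium at each level.
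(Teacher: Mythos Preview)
Your proof is correct and follows the same overall architecture as the paper: establish $q \le p$ by showing $(\langle \Omega_b \rangle, p)$ is a Walrasian equilibrium for the bidders, establish $r \le q$ by showing $q$ is Walrasian at the mediator level, and then use the minimality of the local prices $p^i$ above $r$ to conclude $p \le q$.

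The details differ in two places. First, to show that $q$ is a Walrasian price for the mediators, the paper simply invokes the trivial three-party equilibrium of Theorem~\ref{th:exist} built from the original allocation $\langle \Omega_\beta \rangle$; you instead go through Theorem~\ref{th:equal} and argue directly that envy-freeness of each $b \in \B_i$ at $q$ forces $\cup_{b \in \B_i}\Omega_b$ into the $\OR$-player's demand. Both routes arrive at $r \le q$. Second, for the key step $p^i \le q$, the paper works globally on $\Omega$: by Corollary~\ref{cor:priceeq} (together with Lemma~\ref{lem:efminef}) each $p^i$ equals $p^\Omega$, the minimum envy-free price vector $\ge r$ on $\Omega$ for $\B_i$, so $q \ge r$ being envy-free on $\Omega$ immediately gives $q \ge p^i$. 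You instead work locally, showing that $(\langle \Omega_b\rangle_{b\in\B_i}, q|_{\Omega_i})$ is a \newwe{} on $\Omega_i$ and comparing against the virtual-auction minimum. The paper's route is a little shorter because Corollary~\ref{cor:priceeq} packages the passage from local to global once and for all; your route avoids that corollary but needs the extra verification that unallocated items in $\Omega_i$ satisfy $q_j = r_j$, which you handle correctly via Lemma~\ref{lem:partition}(b). Note also that your swap to $\langle \Omega_i \rangle$ via Lemma~\ref{lem:combine} at the mediator level is not actually needed: once $q$ is any Walrasian price for the mediators, minimality of $r$ already gives $r \le q$.
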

\begin{proof}
As $\langle \Omega_b \rangle$ and $p$ form a Walrasian equilibrium for bidders 
$\B$ and items $\Omega$, we have $p \ge q$. Recall $p \ge r$. Further $\langle 
\Omega_\beta \rangle$, 
$\langle \Omega'_i = \cup_{\beta \in \B_i} \Omega_\beta \rangle$, and $r' = 
p^{1\prime} = \dotsb = p^{m\prime} = q$ form a three-party equilibrium 
(compare Theorem~\ref{th:exist}) and
$\langle \Omega'_i \rangle$ and $q$ provide a Walrasian equilibrium
for the mediators. Thus by the minimality of $r$ we have $q \ge r$.
Assume by contradiction that there exists an item $j$ with $p_j > q_j$.
By Lemma~\ref{lem:efminef} and Corollary~\ref{cor:priceeq} each price 
vector~$p^i$ is the minimum envy-free price vector $\ge r$ for the 
bidders~$\B_i$. Thus at prices $q \not\ge p$ there is no allocation that is 
envy-free for all bidders~$\B$, a contradiction.
\end{proof}

\section{Discussion and Future Directions}\label{sec:discuss}
We proposed a new model for auctions at ad exchanges. Our model is more 
general than previous models in the sense that it takes the incentives of all 
three types of participants into account and that it allows to express 
preferences over multiple items.
Interestingly, at least when gross-substitute valuations are considered,
this generality does not come at the cost of tractability,
as shown by our polynomial-time algorithm. 
Note that this is the most general
result we could expect in light of the classical (two-sided) literature
on combinatorial auctions. 

We considered the special case of a small number of items for which
we showed that existing polynomial-time
algorithms for two-party equilibria can be sped-up by adding mediators.

In our model, with gross-substitute bidders, the revenue of the mediators
only comes from decreasing the central seller's profit (stated formally in 
Section~\ref{sec:minWEprices}). This explains the willingness of the bidders to use
mediators. The central seller experiences a decrease in its workload
with the introduction of mediators, which may partially describe its own
inclination for participating in the market.

Since our model tries to capture a single ``user-impression-of-a-web-page'' sold at an ad exchange, a natural
follow up work will try to model what happens over time.
This direction should take into account the change in
the environment, as the bidders and their valuations as well as the mediators and their connections to bidders can be different for each user impression.

\section*{Acknowledgements}
\noindent We wish to thank Noam Nisan for helpful discussions.
This work was funded by the Vienna 
Science and Technology Fund (WWTF) through project ICT10-002.
Additionally the research leading to these results has received funding from 
the European Research Council under the European Union's Seventh Framework Programme 
(FP/2007-2013) / ERC Grant Agreement no. 340506.

\printbibliography[heading=bibintoc]

\end{document}